\documentclass[11pt, a4paper]{article} 
\usepackage[margin=1in]{geometry}

\usepackage{hyperref}

\usepackage{amssymb}
\usepackage{mathtools}
\usepackage{nicefrac}

\usepackage{amsthm}
\newtheorem{theorem}{Theorem}
\newtheorem{corollary}[theorem]{Corollary}
\newtheorem{lemma}[theorem]{Lemma}
\newtheorem{proposition}[theorem]{Proposition}
\theoremstyle{definition}
\newtheorem{example}[theorem]{Example}

\usepackage{subcaption}
\usepackage{xcolor}
\definecolor{color0}{HTML}{e6e6e6}
\definecolor{color1}{HTML}{2578C1}
\definecolor{color2}{HTML}{E2463E}
\definecolor{color3}{HTML}{B6DFEC}
\definecolor{color4}{HTML}{FDB872}

\usepackage{tikz}
\usetikzlibrary{patterns}
\tikzset{job1/.style={rectangle,draw,anchor=west,minimum height=0.5cm,minimum width=0.5cm, fill=color1}}
\tikzset{job2/.style={rectangle,draw,anchor=west,minimum height=0.5cm,minimum width=0.5cm, fill=color2}}				
\tikzset{job3/.style={rectangle,draw,anchor=west,minimum height=0.5cm,minimum width=0.5cm, fill=color3}}
\tikzset{job_l/.style={rectangle,draw,anchor=west, inner sep=0pt, minimum height= 0.5cm}}

\newcommand{\E}[0]{\mathbb{E}}
\newcommand{\Z}[0]{\mathbb{Z}}
\newcommand{\N}[0]{\mathbb{N}}

\usepackage[authoryear]{natbib}
 \bibpunct[, ]{(}{)}{,}{a}{}{,}%
\begin{document}

\title{Flow Shop Scheduling with Stochastic Reentry}

\author{Maximilian von Aspern$^1$ \and Felix Buld$^1$ \and Michael Pinedo$^2$
}

\maketitle
\vspace{-1.5em}
{\centering\small 
${}^1$ Chair of Operations Research, Technical University of Munich, Arcisstr. 21, 80333 Munich, Germany \\
${}^2$ Leonard N. Stern School of Business, New York University, 44 W 4th St., New York, NY, 10012, USA}

\abstract{We study flow shop scheduling with stochastic reentry, where jobs must complete multiple passes through the entire shop, and the number of passes that a job requires for completion is drawn from a discrete probability distribution.
The goal is to find policies that minimize performance measures in expectation.
Our main contribution is a reduction to a stochastic scheduling problem on identical parallel machines augmented by machine arrivals. This reduction preserves objective values and enables the transfer of structural results and performance guarantees from the auxiliary problems to the reentrant flow shop setting.
We demonstrate the usefulness of this reduction by proving the optimality of simple priority policies for minimizing the makespan and the total completion time in expectation under geometric and, more generally, monotone hazard rate distributions.
For minimizing the total weighted completion time, we derive an approximation guarantee for a simple priority policy that depends only on the squared coefficient of variation of the underlying distributions.
Our results constitute the first optimality and approximation guarantees for flow shops with stochastic reentry and demonstrate that established scheduling policies naturally extend to this setting through the proposed reduction.}

\section{Introduction}

The machine environment \emph{flow shop with reentry} is closely related to the well-understood flow shop and job shop environments.
In classical flow shops, each job must traverse each machine exactly once; in contrast, in this model, jobs may require several passes through the entire flow shop to be completed.
We call each pass through the flow shop a loop and allow jobs to require a different number of loops for completion.
After completing a pass through the flow shop, the job may either be scheduled again on the shop's first machine immediately or re-enter later.
Although general job shops can model this behavior, job shop scheduling policies may not exploit its inherent cyclical structure.
 
The reentry pattern can be used to model manufacturing processes in which multiple layers of material are applied individually, e.g., textile dyeing or mirror fabrication \citep*{ChoiKim2007}.
A prominent application arises in semiconductor manufacturing, where dozens of layers of insulators and conductors are individually deposited onto a wafer through a series of processes on specialized machines \citep*{pfund2006, Moench2011}.
Since wafers pass through the entire flow shop for each layer and require a different number of layers depending on the architecture or function of the desired semiconductor chip, this problem naturally motivates the study of the flow shop model with reentry.

Real-world manufacturing problems are commonly subject to uncertainty, for example, due to processing errors or malfunctions.
To account for such uncertainties, we consider a general model in which the numbers of required loops are randomly distributed and aim to develop policies that are optimal in expectation.
Motivated by takt times in industrial manufacturing settings, we assume that each individual operation requires unit time, independent of the job or loop it belongs to or the machine that processes it.

This paper extends the work of \cite{yu2020} and \cite*{aspern2025}, which consider the same problem in deterministic settings, where the numbers of required loops are known a priori.
We show that when the number of required loops for job~$j$ is drawn from a geometric distribution, or more generally, a monotone-hazard-rate distribution, there exist optimal policies that are natural extensions of optimal policies in deterministic settings.
Because the staggered processing structure prevents a direct application of classical inductive techniques from the stochastic scheduling literature used to derive optimality results, we propose a refined approach to obtain the results presented here. In particular, we construct a set of auxiliary instances of a related parallel-machine scheduling problem such that aggregating the job completion times across these instances exactly recovers the completion times in the corresponding flow shop schedule.
To the best of our knowledge, this is the first study of stochasticity in flow shops with reentry that establishes optimality or approximation guarantees.

\subsection{Model and Definitions}
Throughout this paper, we use the shorthand notation $[N] \coloneqq \{1,2,\ldots,N\}$ to denote the set of positive integers up to $N$.
We study a flow shop consisting of $m$ machines $[m]$ that must be traversed in increasing order and multiple times by $n$ jobs.
For a job~$j \in [n]$, the discrete and integrable random variable $Y_j$ denotes the number of required loops through the flow shop for completion. Loop $l \in [Y_j - 1]$ of job $j$ must precede loop $l+1$, i.e., it must be completed on machine~$m$ before loop $l+1$ may be started on machine $1$.
After job $j \in [n]$ completes its $l$-th loop, the scheduler learns whether $Y_j = l$ or $Y_j > l$.
Furthermore, each machine can process at most one job at a time.
The processing of any job and any loop on any machine takes unit time (i.e., $p_{ijl} = 1$ for all $i \in [m], j \in [n], l \in [Y_j]$ in standard scheduling notation). We assume that $Y_1, \ldots, Y_n$ are mutually independent.
When minimizing a weighted objective function, each job~$j \in [n]$ is also assigned a positive weight $w_j$.
For a given objective function~$\gamma$, the goal is to compute a scheduling policy that minimizes $\gamma$ in expectation.
In this paper, we focus on the objectives of minimizing the makespan, the total completion time, and the total weighted completion time, i.e., $\gamma\in \lbrace C_{\max}, \sum C_j, \sum w_jC_j\rbrace$.

We refer to this problem as \textsc{Flow Shop Scheduling with Stochastic Reentry}.
In the standard three-field-notation, it may be denoted as $F |rntr, p_{ijl}=1,  Y_j \sim \textup{stoch} | E\left[ \gamma \right] $.

Because all processing times are unit, each loop can, without loss of generality, be scheduled contiguously, and we can restrict our attention to schedules in which all operations start at integer times. See Figure~\ref{fig:example} for an illustration of the problem's structure.

The scheduling policies we consider in this paper are dynamic and non-anticipatory.
Dynamic, non-anticipatory policies may use already realized information to make decisions, that is, the number of already completed loops for each job and whether it has already been completed. But their decisions may not depend on unobserved information about the realizations of uncompleted jobs.
In particular, a policy checks at each time step if a job has completed a loop through the flow shop without being fully processed.
If so, this job remains in the system and becomes available for processing again. A job is called available for processing at a given time if it is not yet completed and is not currently being processed by any machine.
Then, the policy either selects an available job to process on machine~$1$ or idles machine~$1$.
It has no knowledge of the exact number of loops a job requires, but only of its distribution.

\begin{figure}[b]
    \centering
    \begin{tikzpicture}[thick]
			\draw[->] (0,0) -- (5,0) node[below]{$t$};
            \draw[-] (0,0) -- (0,2) node[below]{};
            
            \draw (0,0) -- (0,-0.2) node[below]{$0$};
            \draw (1.5,0) -- (1.5,-0.2) node[below]{$3$};
            \draw (3,0) -- (3,-0.2) node[below]{$6$};
            \draw (4.5,0) -- (4.5,-0.2) node[below]{$9$};

            {\draw[fill=color1]  (0,1) rectangle (0.5,1.5) node[pos=.5] {$1$};}
            {\draw[fill=color1] 	 (0.5,0.5) rectangle (1,1) node[pos=.5] {$1$};}
            {\draw[fill=color1]	 (1,0) rectangle (1.5,0.5) node[pos=.5] {$1$};}

            {\draw[fill=color1]	 (2,1) rectangle (2.5,1.5) node[pos=.5] {$1$};}
            {\draw[fill=color1] 	 (2.5,0.5) rectangle (3,1) node[pos=.5] {$1$};}
            {\draw[fill=color1] 	 (3,0) rectangle (3.5,0.5) node[pos=.5] {$1$};}

            {\draw[fill = color2] 	 	 (1.5,1) rectangle (2,1.5) node[pos=.5] {$2$};}
            {\draw[fill = color2] 		 (2,0.5) rectangle (2.5,1) node[pos=.5] {$2$};}
            {\draw[fill = color2] 	 	 (2.5,0) rectangle (3,0.5) node[pos=.5] {$2$};}

            {\draw[fill = color2]	 (3,1) rectangle (3.5,1.5) node[pos=.5] {$2$};}
            {\draw[fill = color2]	 (3.5,0.5) rectangle (4,1) node[pos=.5] {$2$};}
           {\draw[fill = color2]	 (4,0) rectangle (4.5,0.5) node[pos=.5] {$2$};}

            {\draw[fill = color3] 	 (1,1) rectangle (1.5,1.5) node[pos=.5] {$3$};}
            {\draw[fill = color3]  	 (1.5,0.5) rectangle (2,1) node[pos=.5] {$3$};}
            {\draw[fill = color3]  	 (2,0) rectangle (2.5,0.5) node[pos=.5] {$3$};}
            
            {\draw[fill = color4] 	 (0.5,1) rectangle (1,1.5) node[pos=.5] {$4$};}
            {\draw[fill = color4] 	 	 (1,0.5) rectangle (1.5,1) node[pos=.5] {$4$};}
            {\draw[fill = color4] 	 	 (1.5,0) rectangle (2,0.5) node[pos=.5] {$4$};}
		\end{tikzpicture}
    \caption{A feasible schedule for $m=3$ machines and a realization in which $Y_1 = Y_2 = 2, Y_3 = Y_4 = 1$. Job~$1$ is interrupted after its first loop, whereas the loops of job~$2$ are processed consecutively. Contiguous processing of operations within each loop gives rise to the characteristic staircase shape.}
    \label{fig:example}
\end{figure}

More concisely, a \emph{policy} $\pi$ is a function that takes as input an instance~$I$, the current time~$t \in \mathbb{N}_0$, all previous scheduling decisions, and, for each job the information, whether it has already been completed,
and outputs an available job~$j$ or a unit-length idle period to be scheduled next on the first machine.
In general, such a policy may require exponential space to encode or exponential time to evaluate due to the size of the state space.
We are particularly interested in simple static priority policies that assign a fixed priority value to each job and always schedule the job with the highest priority among all available jobs next.

Let OPT denote an optimal policy. We call $\pi$ an $\alpha$-\emph{approximation} 
if $\E[\gamma(\pi(I))] \leq \alpha \E[\gamma(\textup{OPT(I)})]$ for all instances $I$, where $\E[\gamma(\pi(I))]$ denotes the expected objective value of policy $\pi$ on instance~$I$. We call $\alpha$ the \emph{approximation ratio} or \emph{performance guarantee} under policy $\pi$.

\subsection{Related Work}
Flow shops with reentry were originally introduced by \cite*{Graves1983}, and various heuristics have since been developed \cite*[e.g., by][]{Mason2002,pan2003,Choi2008,Jing2011,segal2026}. Polynomial-time solvable special cases or approximation guarantees have so far mostly been analyzed for instances with unit processing times in environments with reentry, for example, by \cite*{Brucker1999,Timkovsky2003,yu2020,shufan2023}, and \cite{aspern2025}.
Other reentrant flow shop models, in which jobs re-enter only some of the flow shop's machines, were also studied. For an overview, see, for example, \cite{Middendorf2002,danping2011}, and \cite{Emmons2012}.

The literature on stochasticity in flow shops with reentry remains sparse. The few existing works in this area mostly focus on deterministic reentry and stochastic processing times \cite*[e.g.,][]{dugardin2010,liu2024}, or on application-oriented reactive rescheduling approaches based on job rework \cite*{lee2011}. 

By contrast, there is a large body of results on stochastic scheduling in traditional flow shops and on parallel machines; see \cite{Emmons2012} and \cite{pinedo2022} for comprehensive overviews.
The optimality of simple priority policies for the makespan and total completion time objectives when scheduling jobs with geometrically (or exponentially) distributed processing times on parallel machines was established by various authors \citep*{Pinedo1979,Bruno1981,Gittins1981,Heyden1981,Weber1982,Kampke1987,Chang1991}.
Most relevant to the problem described in this paper are the results by \cite{glazebrook1979} and \cite{Weber1979} on parallel machine models, in which the number of machines increases monotonically over time.
The authors showed that simple priority policies are optimal for a broad class of random processing time distributions, including geometric and deterministic processing times.
For the total weighted completion time objective, the first policy with a parameterized approximation ratio for arbitrary distributions was given by \cite*{Mohring1999}.
Its approximation ratio was later improved by \cite{jager2018}.

\subsection{Outline and Results}

In Section~\ref{sec:reduction}, we present our reduction to an auxiliary parallel machine scheduling problem.
Specifically, we provide sufficient conditions for the existence of policies with bounded performance guarantees in the reentrant flow shop setting.
Section~\ref{sec:LERL&MERL} contains several applications of our reduction.
We state simple priority policies for the makespan and the total completion time objective and show that they are optimal under certain assumptions.
For the total weighted completion time objective, we show that an equally simple priority policy achieves a parameterized approximation ratio. Technical details of the analysis are contained in Section~\ref{sec:WSEPT}. We extend our results to practically motivated distributions in Section~\ref{sec:Extensions} and conclude with a discussion in Section~\ref{sec:outlook}.

\section{Reduction to Stochastic Scheduling with Machine Arrivals}
\label{sec:reduction}

In this section, we present our main technical contribution: a reduction to an auxiliary problem, which we shall refer to as \textsc{Stochastic Scheduling with Machine Arrivals}.
This is a classical scheduling problem on $m$ identical parallel machines augmented with machine arrivals.
A machine~$i \in [m]$ arrives at time $a_i \in \Z$ and must not be used before this time.
Once a machine arrives, it remains available until all jobs have been completed. 
A random variable $X_j$ denotes the processing time of job~$j$, and its realization becomes known only upon completion of the job.
Jobs may be preempted, i.e., their processing can be halted and resumed later on any machine.
Using the standard three-field notation, one may denote this problem as $P | a_i, X_j \sim \textup{stoch}, \textup{pmtn} | \E\left[ \gamma \right]$.
The reduction conceptually consists of two separate steps, as outlined below.
\begin{enumerate}
    \item Each instance of \textsc{Flow Shop Scheduling with Stochastic Reentry} can be viewed as an instance of \textsc{Stochastic Scheduling with Machine Arrivals} with restrictions on preemption and starting times: Loops (which each require $m$ time units) are mapped to processing blocks of length $m$, which are to be processed on $m$ identical parallel machines.
    Because loops are scheduled contiguously, preemption only occurs at the end of such a block.
    To capture that the loops are completed sequentially in the flow shop model, we let the machines become available at staggered times $0,1, \ldots, m-1$, and assign the block corresponding to a loop starting at time $\tau$ to machine $(\tau \mod m) + 1$.
    Figure~\ref{fig:reduction1} illustrates this first reduction step.
    \item Instead of working with this single instance of $P | a_i, X_j \sim \textup{stoch}, \textup{pmtn} | \E\left[ \gamma \right]$, we generate~$m$ auxiliary instances, in which each machine arrives either at time $0$ or at time~$1$. 
    Any feasible policy for \textsc{Flow Shop Scheduling with Stochastic Reentry} induces a feasible policy for each auxiliary instance, and the average behavior on these instances exactly matches the behavior on the original instance, as illustrated in Figure~\ref{fig:reduction2}.
\end{enumerate}

\begin{figure}
    \centering
    \begin{tikzpicture}[thick]
			\draw[->] (0,0) -- (5,0) node[below]{$t$};
            \draw[-] (0,0) -- (0,2) node[below]{};
            
            \draw (0,0) -- (0,-0.2) node[below]{$0$};
            \draw (1.5,0) -- (1.5,-0.2) node[below]{$3$};
            \draw (3,0) -- (3,-0.2) node[below]{$6$};
            \draw (4.5,0) -- (4.5,-0.2) node[below]{$9$};

            \draw[fill=color1]  (0,1) rectangle (1.5,1.5) node[pos=.5] {$1$};
        
            \draw[fill=color1] 	 	 (2,0.5) rectangle (3.5,1) node[pos=.5] {$1$};
            
            
            \draw[fill = color2]  	 (1.5,1) rectangle (3,1.5) node[pos=.5] {$2$};

            \draw[fill = color2]  	 (3,1) rectangle (4.5,1.5) node[pos=.5] {$2$};

            \draw[fill = color3] 	 	 (1,0) rectangle (2.5,0.5) node[pos=.5] {$3$};
            
            \draw[fill = color4] 	 (0.5,0.5) rectangle (2,1) node[pos=.5] {$4$};      

            \fill[pattern=north east lines] (0,0.5) rectangle (0.5,1);
            \fill[pattern=north east lines] (0,0) rectangle (1,0.5);

	\end{tikzpicture}
    \caption{The flow shop schedule from Figure~\ref{fig:example} transformed into a schedule on $m=3$ parallel machines with arrival times $0,1,2$. Each loop corresponds to a contiguous processing block of length $m = 3$, and completion times are preserved.}
    \label{fig:reduction1}
\end{figure}

\begin{figure}
    \centering
    \begin{subfigure}[b]{0.32\columnwidth}
    \centering
        \begin{tikzpicture}[thick]
			\draw[->] (0,0) -- (2,0) node[below]{$t$};
            \draw[-] (0,0) -- (0,2) node[below]{};
            
            \draw (0,0) -- (0,-0.2) node[below]{$0$};
            \draw (0.5,0) -- (0.5,-0.2) node[below]{$1$};
            \draw (1,0) -- (1,-0.2) node[below]{$2$};
            \draw (1.5,0) -- (1.5,-0.2) node[below]{$3$};

            \draw[fill=color1]  (0,1) rectangle (0.5,1.5) node[pos=.5] {$1$};
            \draw[fill= color1] 	 	 (0.5,0.5) rectangle (1,1) node[pos=.5] {$1$};
            
            \draw[fill = color2]  	 (0.5,1) rectangle (1,1.5) node[pos=.5] {$2$};
            \draw[fill = color2]  	 (1,1) rectangle (1.5,1.5) node[pos=.5] {$2$};
            
            \draw[fill = color3] 	 	 (0,0) rectangle (0.5,0.5) node[pos=.5] {$3$};
            
            \draw[fill = color4] 	 (0,0.5) rectangle (0.5,1) node[pos=.5] {$4$};
	    \end{tikzpicture}
        \caption{$A_1$}
    \end{subfigure}
    \begin{subfigure}[b]{0.32\columnwidth}
    \centering
        \begin{tikzpicture}[thick]
			\draw[->] (0,0) -- (2,0) node[below]{$t$};
            \draw[-] (0,0) -- (0,2) node[below]{};
            
            \draw (0,0) -- (0,-0.2) node[below]{$0$};
            \draw (0.5,0) -- (0.5,-0.2) node[below]{$1$};
            \draw (1,0) -- (1,-0.2) node[below]{$2$};
            \draw (1.5,0) -- (1.5,-0.2) node[below]{$3$};

            \draw[fill=color1]  (0,1) rectangle (0.5,1.5) node[pos=.5] {$1$};
            \draw[fill= color1] 	 	 (0.5,0.5) rectangle (1,1) node[pos=.5] {$1$};
            
            \draw[fill = color2]  	 (0.5,1) rectangle (1,1.5) node[pos=.5] {$2$};
            \draw[fill = color2]  	 (1,1) rectangle (1.5,1.5) node[pos=.5] {$2$};

            \draw[fill = color3] 	 	 (0.5,0) rectangle (1,0.5) node[pos=.5] {$3$};
            
            \draw[fill = color4] 	 (0,0.5) rectangle (0.5,1) node[pos=.5] {$4$};
           
            \fill[pattern=north east lines] (0,0) rectangle (0.5,0.5);
    	\end{tikzpicture}
        \caption{$A_2$}
    \end{subfigure}
    \begin{subfigure}[b]{0.32\columnwidth}
    \centering
        \begin{tikzpicture}[thick]
			\draw[->] (0,0) -- (2,0) node[below]{$t$};
            \draw[-] (0,0) -- (0,2) node[below]{};
            
            \draw (0,0) -- (0,-0.2) node[below]{$0$};
            \draw (0.5,0) -- (0.5,-0.2) node[below]{$1$};
            \draw (1,0) -- (1,-0.2) node[below]{$2$};
            \draw (1.5,0) -- (1.5,-0.2) node[below]{$3$};

            \draw[fill=color1]  (0,1) rectangle (0.5,1.5) node[pos=.5] {$1$};
            \draw[fill= color1] 	 	 (1,0.5) rectangle (1.5,1) node[pos=.5] {$1$};

            \draw[fill = color2]  	 (0.5,1) rectangle (1,1.5) node[pos=.5] {$2$};
            \draw[fill = color2]  	 (1,1) rectangle (1.5,1.5) node[pos=.5] {$2$};

            \draw[fill = color3] 	 	 (0.5,0) rectangle (1,0.5) node[pos=.5] {$3$};
            
            \draw[fill = color4] 	 (0.5,0.5) rectangle (1,1) node[pos=.5] {$4$};
           
            \fill[pattern=north east lines] (0,0) rectangle (0.5,1);
	    \end{tikzpicture}
        \caption{$A_3$}
    \end{subfigure}
    \caption{The $m=3$ auxiliary instances $A_1, A_2, A_3$ with zero, one, and two machines arriving at time~$1$, respectively. Completion times aggregated over auxiliary instances equal completion times in Figure~\ref{fig:reduction1}.}
    \label{fig:reduction2}
\end{figure}

With this reduction, we exploit the problem's structure and enable a transfer of known techniques and results from parallel machine scheduling to \textsc{Flow Shop Scheduling with Stochastic Reentry}.
The first reduction step is based on ideas presented in \cite{Timkovsky2003} and \cite{aspern2025} and can be understood as a relabeling of the original problem (assuming we also allow idle time only in blocks of size $m$).
Because the processing blocks are staggered and may not be preempted, the resulting problem still prevents the direct use of inductive arguments based on simultaneous preemption, such as those by \cite{glazebrook1979} and \cite{Weber1979}.
The second step removes this obstacle by introducing instances in which all jobs can be preempted simultaneously at integer times. We show below that approximation guarantees can be transferred from the auxiliary instances to the reentrant flow shop problem in an aggregated manner.
This enables us to extend known strategies for parallel machine scheduling to these instances in Sections~\ref{sec:LERL&MERL} and~\ref{sec:WSEPT}.

Separating the reduction into two steps provides intuition and relates our construction to similar ideas that have previously appeared in the literature. In the formal description below, we skip the first intermediate step and instead construct the $m$ auxiliary instances directly from the original flow shop problem.

For an instance~$I$ of the problem \textsc{Flow Shop Scheduling with Stochastic Reentry} with $n$ jobs and $m$ machines, we define a set of instances $\{A_1(I), A_2(I), \ldots, A_m(I)\}$ of the problem \textsc{Stochastic Scheduling with Machine Arrivals} as follows. 
For each job~$j \in [n]$ with $Y_j$ loops in $I$, the instances~$A_k(I)$, $k\in[m]$, each contain a job of stochastic processing time $X_j = Y_j$.
For all instances $I$, the arrival time of machine~$i \in [m]$ in instance~$A_k(I)$ is given by:
\begin{equation*}
       a^k_i = 
       \begin{cases}
            0, & \text{ if } i + k \leq m + 1, \\
            1, & \text{ else.}
        \end{cases}
   \end{equation*}
In particular, there are only $m-k+1$ machines available at time $0$ in instance~$A_k(I)$, and the remaining $k-1$ machines arrive at time $1$ (as illustrated in Figure~\ref{fig:reduction2}). Key ideas behind this decomposition in auxiliary instances are that processing times of jobs are split equally among the auxiliary instances, and that the machine arrivals aggregate over all auxiliary instances to the machine arrivals in the problem in the first step, i.e., we have $\sum_{k=1}^m a_i^k = i-1$ for all $i\in[m]$. If the instance $I$ is clear from the context, we occasionally write $A_k$ instead of~$A_k(I)$ for brevity.

In the following, we show how any policy~$\pi$ for instance~$I$ of \textsc{Flow Shop Scheduling with Stochastic Reentry} can be used to construct policies for instance~$A_k(I)$ in a way that preserves the expected objective value.
Let $\tau \in \mathbb{N}_0$ be the time at which $\pi$ schedules a loop of a job~$j$ (or a unit of idle period) on machine~$1$.
We define $t \coloneqq \lfloor \nicefrac{\tau}{m}\rfloor \in \mathbb{N}_0$ and $i \coloneqq \left( \tau \mod m \right) + 1 \in [m]$, i.e., $\tau = m \cdot t + i - 1$.
Now we can construct a policy~$\pi_k$ for instance~$A_k(I)$ which schedules job~$j$ at time~$t + a^k_i$ on machine~$i$ for one time unit as illustrated in Figure~\ref{fig:reduction2}.
We call $\pi_k$ the policy induced by $\pi$ on instance~$A_k(I)$.

The following ``monotonicity-preserving'' properties of this time mapping are used repeatedly in the remainder of this section.

\begin{lemma}
    \label{lemma: time mapping}
    Let $ \tau_1 = m \cdot t_1 + i_1 -1$ and $ \tau_2 = m \cdot t_2 + i_2-1$ be two integral points in time with $i_1, i_2 \in \left[m\right]$ as above and $k\in\left[m\right]$. Then, $\tau_1\leq\tau_2$ implies $t_1 + a_{i_1}^k\leq t_2 + a_{i_2}^k$. If moreover $\tau_1 + m \leq \tau_2$, then $t_1 + a_{i_1}^k < t_2 + a_{i_2}^k$.
\end{lemma}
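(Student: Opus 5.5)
The plan is to reduce both claims to elementary facts about the map $i \mapsto a_i^k$. By definition, $a_i^k$ takes values in $\{0,1\}$ and is non-decreasing in $i$ for fixed $k$: it equals $0$ for $i \le m+1-k$ and $1$ afterwards. We then compare $t_1$ and $t_2$ in two cases.

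\emph{First claim.} Suppose $\tau_1 \le \tau_2$ and write $\tau_\ell = m t_\ell + i_\ell - 1$ with $i_\ell - 1 \in \{0,\ldots,m-1\}$. Since $t_\ell = \lfloor \tau_\ell/m \rfloor$ and the floor is monotone, we get $t_1 \le t_2$.
\begin{itemize}
\item If $t_1 = t_2$, then $\tau_1 \le \tau_2$ forces $i_1 \le i_2$. Monotonicity of $a^k_\cdot$ gives $a^k_{i_1} \le a^k_{i_2}$, and hence $t_1 + a^k_{i_1} \le t_2 + a^k_{i_2}$.
\item If $t_1 < t_2$, then $t_2 \ge t_1 + 1$. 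Since $a^k_{i_1} \le 1$ and $a^k_{i_2} \ge 0$, we obtain $t_1 + a^k_{i_1} \le t_1 + 1 \le t_2 \le t_2 + a^k_{i_2}$.
\end{itemize}

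\emph{Second claim.} Suppose $\tau_1 + m \le \tau_2$. Then $\lfloor \tau_2/m \rfloor \ge \lfloor \tau_1/m \rfloor + 1$, so $t_2 \ge t_1 + 1$.
\begin{itemize}
\item If $t_2 \ge t_1 + 2$, we get $t_1 + a^k_{i_1} \le t_1 + 1 < t_2 \le t_2 + a^k_{i_2}$.
\item If $t_2 = t_1 + 1$, then $\tau_2 - \tau_1 = m + i_2 - i_1 \ge m$ gives $i_2 \ge i_1$, and so $a^k_{i_2} \ge a^k_{i_1}$. Therefore $t_2 + a^k_{i_2} = t_1 + 1 + a^k_{i_2} > t_1 + a^k_{i_1}$.
\end{itemize}

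The only step needing any care is the boundary case $t_2 = t_1 + 1$ in the second claim. There, a jump of $a^k$ from $1$ down to $0$ could otherwise absorb the $+1$ coming from $t$. What rules this out is that $\tau_1 + m \le \tau_2$ forces $i_2 \ge i_1$, so $a^k$ cannot decrease between $i_1$ and $i_2$. The whole argument rests on the monotonicity and $\{0,1\}$-valuedness of $a^k_i$ in $i$.
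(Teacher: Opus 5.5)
Your proof is correct and follows essentially the same route as the paper: you use that $a_i^k\in\{0,1\}$ is non-decreasing in $i$. You then split into the cases $t_1<t_2$ versus $t_1=t_2$ (first claim) and $t_1+1<t_2$ versus $t_1+1=t_2$ (second claim), and your explicit justification that $i_2\ge i_1$ in the boundary case fills in the step the paper only summarizes as ``applying the same arguments.''
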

\begin{proof}
    Let $\tau_1\leq\tau_2$ and consider two cases. If $t_1+1\leq t_2$, then we have $t_1 + a_{i_1}^k\leq t_2 + a_{i_2}^k$, since $a_{i_1}^k,a_{i_2}^k \in [0,1]$. Otherwise, it must hold that $t_1 = t_2$ and $i_1 \leq i_2$. The inequality follows since $a_{i_1}^k \leq a_{i_2}^k$ by definition of $a_i^k$. For $\tau_1+m\leq\tau_2$, applying the same arguments with cases $t_1+1< t_2$ and $t_1+1= t_2$ completes the proof. 
\end{proof}

The following theorem shows that our reduction preserves completion times exactly (aggregated over the set of auxiliary instances), a crucial ingredient for establishing approximation guarantees and optimality results using our framework. Note that this theorem also applies to policies that use interruptions between two loops of a job (in this case, the induced policies will produce a preemption).

\begin{theorem}
    \label{theorem:policy} 
    Given an instance~$I$ and a policy $\pi$, the induced policy~$\pi_k$ is feasible for instance $A_k(I)$ and satisfies:
    \begin{equation*}
          C_j^\pi(y) = \sum_{k=1}^{m} C_j^{\pi_k}(y)
    \end{equation*}
    for all $k \in [m]$, all $j \in [n]$, and all realizations $y$ of processing times drawn from the distribution of $\prod_{j = 1}^n Y_j$.
\end{theorem}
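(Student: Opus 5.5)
Fix a realization $y$ and let $\pi$ be run on it. Since all decisions only depend on the start times on machine~$1$, the flow shop schedule is described by the set of start times $\tau$ at which loops are started, each associated with a job. I will first prove feasibility of $\pi_k$ and then the completion-time identity, working realization by realization; since $\pi_k$ reads off decisions of $\pi$, non-anticipation is inherited once feasibility is shown for every $y$.

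\emph{Feasibility.} Three things must hold in $A_k(I)$. First, machine $i$ is only used after its arrival: the unit is placed at $t+a_i^k\ge a_i^k$. Second, no machine processes two jobs at once: two loops assigned to the same machine $i$ have $\tau_1=mt_1+i-1\neq\tau_2=mt_2+i-1$, so $t_1\neq t_2$ and the slots $t_1+a_i^k$, $t_2+a_i^k$ differ. Third, no job is processed on two machines simultaneously and the total processing equals $X_j=Y_j$: consecutive loops of job $j$ in the flow shop start at times $\tau_1<\tau_2$ with $\tau_1+m\le\tau_2$ (the previous loop must finish on machine $m$ first), so the strict part of Lemma~\ref{lemma: time mapping} gives $t_1+a_{i_1}^k<t_2+a_{i_2}^k$, i.e.\ distinct, correctly ordered unit slots. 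Each of the $y_j$ loops yields exactly one unit in each $A_k$, so job $j$ receives exactly $y_j$ units; preemption is allowed, so gaps are fine. Also, $\pi_k$ only needs to know whether $j$ has completed after a unit, which corresponds to learning $Y_j=l$ vs.\ $Y_j>l$ after loop $l$; by the monotone ordering of Lemma~\ref{lemma: time mapping} the information available to $\pi$ at time $\tau$ is revealed in $A_k$ no later than slot $t+a_i^k$ (loops finishing before $\tau$ start at $\tau'\le\tau-m$, hence map to strictly earlier slots).

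\emph{Completion times.} Let the last loop of $j$ start at $\tau=mt+i-1$; then $C_j^\pi(y)=\tau+m=mt+i-1+m$. By the monotonicity (non-strict and strict) in Lemma~\ref{lemma: time mapping}, this last loop also maps to the last unit of $j$ in every $A_k$, so $C_j^{\pi_k}(y)=t+a_i^k+1$. Summing over $k$ and using the identity $\sum_{k=1}^m a_i^k=i-1$ noted above gives $\sum_k C_j^{\pi_k}(y)=mt+m+i-1=C_j^\pi(y)$.

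The only step I expect to need care is the identification of the last unit in $A_k$ with the last loop and the information/non-anticipativity bookkeeping; both reduce to the strict inequality of Lemma~\ref{lemma: time mapping}, so the main work is just checking that consecutive loops of one job are indeed at least $m$ apart, which follows from the precedence constraint between loops.
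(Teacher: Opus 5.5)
Your proof is correct and follows the paper's argument: feasibility from the strict part of Lemma~\ref{lemma: time mapping} applied to consecutive loops of a job, non-anticipativity from the monotone time mapping, and the completion-time identity $C_j^{\pi_k}(y)=t+a_i^k+1$ summed over $k$ using $\sum_k a_i^k=i-1$. The differences are minor: you check the arrival-time constraint and the correspondence between the last loop and the last unit explicitly, and for non-anticipation you apply the strict inequality to start times together with integrality, whereas the paper applies the non-strict inequality directly to the flow-shop completion time of the earlier loop.
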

\begin{proof}
    We first show that $\pi_k$ is feasible, i.e., that a machine processes at most one job at a time and a job occupies only one machine at a time.
    The former property holds by construction, and we prove the latter in the following.
    Let two loops~$l_1$ and~$l_2$ of job~$j$ be started at times $\tau_1$ and $\tau_2$ by $\pi$. Because $\pi$ schedules those loops feasibly, we have $\tau_1+m\leq\tau_2$. Thus, Lemma~\ref{lemma: time mapping} implies that those two loops are not scheduled concurrently by $\pi_k$. 
    
    Additionally, we must show that $\pi_k$ is non-anticipatory. 
    Let $l_2$ be a unit of job~$j_2$ to be scheduled by $\pi_k$.
    Because its scheduling decision is, by definition, dependent on the scheduling decision of $\pi$ for the corresponding loop, $\pi_k$ must use the same information as $\pi$ for this decision.
    At the time of scheduling the corresponding loop, the policy $\pi$ may use information realized at the completion of all previously completed loops.
    Let $l_1$ denote such a loop and $j_1$ its corresponding job.
    Now it suffices to show that unit $l_1$ of job~$j_1$ does not complete after the time of scheduling unit $l_2$ of $j_2$ under $\pi_k$. 
    That is, all stochastic information accessed by $\pi_k$ has already been realized in instance $A_k(I)$ at the time of accessing it.
    With the notation of $\tau_1$ for the completion time of $l_1$ and~$\tau_2$ for the starting time of $l_2$, and $\tau_1\leq \tau_2$, this follows from Lemma~\ref{lemma: time mapping}. 
     
    Due to the construction of the policies for the auxiliary instances, job~$j$ in $A_k(I)$ is processed for exactly one time unit for each loop that job~$j$ completes in $I$, and it holds that for any~$y$:
    \begin{equation*}
        C^{\pi_k}_j(y) = \frac{C^\pi_j(y) - (i-1)}{m} + a_i^k.
    \end{equation*}
    Here, $i\in[m]$ denotes the machine index on which the last unit of job~$j$ is scheduled by policy $\pi_k$ in realization~$y$ as defined above. Summing over all $k \in [m]$ and observing that $\sum_{k=1}^m a_i^k = i-1$ completes the proof.
\end{proof}

    While any non-anticipatory and feasible policy $\pi$ for $I$ induces non-anticipatory and feasible policies $\pi_k$ with the desired properties for each $A_k(I)$, $k \in [m]$, the reverse does not hold.
    In particular, there are feasible and non-anticipatory policies $\pi_k$ for $A_k(I)$ that cannot be induced by a policy $\pi$ for $I$. For example, consider a policy $\pi_k$ that allows preemptions at non-integer times.

We can now state our main theorem, which enables a black-box transfer of approximation guarantees from the auxiliary problem to the original problem for the objective functions $\gamma \in \{\sum C_j, \sum w_jC_j, C_{\max}\}$.

\begin{theorem}
\label{thm:approximation} 
    If $E[\gamma(\pi_k)] \leq \alpha \E[\gamma(\textup{OPT}(A_k(I)))]$ for all instances $I$ and all $k \in [m]$, then policy~$\pi$ is an $\alpha$-approximation for \textsc{Flow Shop Scheduling with Stochastic Reentry}.
\end{theorem}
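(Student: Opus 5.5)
The argument chains two inequalities: an upper bound on $\E[\gamma(\pi(I))]$ through the induced policies via Theorem~\ref{theorem:policy}, and a lower bound on $\E[\gamma(\textup{OPT}(I))]$ through the same theorem applied to an optimal flow shop policy. Fix an instance $I$ and let $\textup{OPT}$ be an optimal policy for $I$; the state space is finite, so an optimal policy exists, and otherwise a near-optimal policy with a limiting argument will do.

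\textbf{Step 1 (objective aggregation).} I first establish that for each $\gamma\in\{\sum C_j,\sum w_jC_j,C_{\max}\}$ and every realization $y$,
\[
\gamma(\pi(I))(y) \;=\; \sum_{k=1}^m \gamma(\pi_k)(y) \quad\text{or at least}\quad \gamma(\pi(I))(y) \le \sum_{k=1}^m \gamma(\pi_k)(y).
\]
For the linear objectives, equality is immediate: multiply the identity $C_j^\pi(y)=\sum_k C_j^{\pi_k}(y)$ of Theorem~\ref{theorem:policy} by $w_j$ and sum over $j$. For the makespan, let $j^*$ attain $C_{\max}^\pi(y)$. Then
\[
C_{\max}^\pi(y)=\sum_k C_{j^*}^{\pi_k}(y)\le \sum_k C_{\max}^{\pi_k}(y).
\]
I expect equality to hold here too. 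By Lemma~\ref{lemma: time mapping}, the last-completing loop under $\pi$ maps to a last-completing unit in every $A_k$, since the map $\tau\mapsto t+a_i^k$ is monotone. Only the inequality $\le$ is needed, however. Taking expectations gives $\E[\gamma(\pi(I))]\le\sum_k\E[\gamma(\pi_k)]$.

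\textbf{Step 2 (upper bound).} Combining Step~1 with the hypothesis yields
\[
\E[\gamma(\pi(I))]\le \alpha\sum_k \E[\gamma(\textup{OPT}(A_k(I)))].
\]

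\textbf{Step 3 (lower bound on OPT).} This is the step where the direction of the inequality matters. Apply Theorem~\ref{theorem:policy} to $\textup{OPT}$ for $I$, giving induced policies $\textup{OPT}_k$ that are feasible and non-anticipatory for $A_k(I)$. Hence
\[
\E[\gamma(\textup{OPT}(A_k(I)))]\le \E[\gamma(\textup{OPT}_k)].
\]
I then need $\sum_k \gamma(\textup{OPT}_k)(y)\le \gamma(\textup{OPT}(I))(y)$. For the linear objectives this holds with equality by Step~1. For the makespan, I need $\sum_k C_{\max}^{\textup{OPT}_k}(y)\le C_{\max}^{\textup{OPT}}(y)$, which is the reverse of the trivial direction. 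I would prove it via monotonicity. Take the job $j^*$ whose last loop starts latest under $\textup{OPT}$. By Lemma~\ref{lemma: time mapping}, in each $A_k$ every other job's last unit starts no later than that of $j^*$. Since all units have length one, $C_{\max}^{\textup{OPT}_k}(y)=C_{j^*}^{\textup{OPT}_k}(y)$ for all $k$, and summing gives equality. The same argument yields the equality I suspected in Step~1.

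\textbf{Step 4 (conclusion).} Chaining the three bounds gives
\[
\E[\gamma(\pi(I))]\le\alpha\sum_k\E[\gamma(\textup{OPT}_k)]=\alpha\,\E[\gamma(\textup{OPT}(I))],
\]
which is the claimed $\alpha$-approximation. The only delicate point is the makespan identity in Step~3, and it relies solely on the monotonicity in Lemma~\ref{lemma: time mapping}.
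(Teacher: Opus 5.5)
Your proposal is correct and follows essentially the same route as the paper. Both arguments aggregate the objective over the induced policies via Theorem~\ref{theorem:policy}, handle the makespan with the monotonicity of Lemma~\ref{lemma: time mapping} (the job attaining the flow shop makespan also attains it in every $A_k$), and compare against the induced policies of an optimal flow shop policy, which are feasible for $A_k(I)$ and hence no better than $\textup{OPT}(A_k(I))$. One small correction: the state space need not be finite (geometric $Y_j$ are unbounded), but no limiting argument is needed, since your chain of inequalities works verbatim with any feasible policy $\pi'$ in place of $\textup{OPT}$.
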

\begin{proof}
    Let $\pi^*$ denote the optimal policy for the flow shop instance and $\pi^*_k$ its induced policies for all $k \in [m]$.
    Note that it need not hold that $\pi^*_k = \textup{OPT}(A_k(I))$.
    However, by optimality of $\textup{OPT}(A_k(I))$, and the assumption, we have for all $k \in [m]$:
    \begin{equation*}
    E[\gamma(\pi_k)] \leq \alpha \E[\gamma(\textup{OPT}(A_k(I)))] \leq \alpha \E[\gamma(\pi^*_k)].
    \end{equation*}
    
    Let us first consider the objective $\sum w_j C_j$. 
    Applying Theorem~\ref{theorem:policy} and summing over all jobs, we get for any realization $y$:
    \begin{equation*}
          \sum_j w_j C_j^{\pi}(y) = \sum_{k=1}^{m}\sum_j w_j C_j^{\pi_k}(y).
    \end{equation*}   
    Note that this equality also holds for any other feasible policy, in particular, the optimal policy~$\pi^*$.
    Taking expectation over $y$ and applying the assumption, we obtain:
    \begin{equation*}
    \begin{split}
        \E\left[\sum_j w_jC^\pi_j\right] = \sum_{k=1}^{m} \E\left[\sum_j w_j C^{\pi_k}_j\right]
        \leq  \sum_{k=1}^{m} \alpha \cdot \E\left[\sum_j w_jC^{\pi^*_k}_j\right] =\alpha \cdot \E\left[\sum_j w_j C^{\pi^*}_j \right].
    \end{split}
    \end{equation*}
    
    The result for the objective $\sum C_j$ follows immediately. 
    For the makespan objective, let $C^{\pi}_{\max}(y)$ denote the makespan under policy $\pi$ and realization $y$, and let $j'$ denote the job that attains the makespan. By the monotonicity argument from Lemma~\ref{lemma: time mapping}, the makespan under policy $\pi_k$ on instance $A_k(I)$ is also attained by job~$j'$ (and potentially other jobs simultaneously). By Theorem~\ref{theorem:policy}, we obtain:
    \begin{equation*}
        C_{\max}^{\pi}(y) = C_{j'}^{\pi}(y) = \sum_{k=1}^{m} C_{j'}^{\pi_k}(y) = \sum_{k=1}^{m} C_{\max}^{\pi_k}(y).
    \end{equation*}
    Again, this equality also holds for the optimal policy $\pi^*$. The proof now follows analogously. 
\end{proof}
This theorem can, in particular, be used for $\alpha=1$ to establish the optimality of a policy~$\pi$, as seen in Corollary~\ref{corollary:LERL}.

\section{Optimal and Approximately Optimal Priority Policies}
\label{sec:LERL&MERL}

In this section, we apply Theorem~\ref{thm:approximation} to derive optimality results or approximation guarantees for certain policies for \textsc{Flow Shop Scheduling with Stochastic Reentry} by leveraging results for \textsc{Stochastic Scheduling with Machine Arrivals}.
Specifically, we consider a class of priority policies that are of particular interest due to their simplicity. Given a predefined static priority value for each job $j$, \emph{static priority policies} schedule at any time the job~$j$ with the largest priority value among all available jobs. In particular, static priority policies for \textsc{Flow Shop Scheduling with Stochastic Reentry} are \emph{non-interruptive}, i.e., whenever a job re-enters the system after being processed on machine~$m$, it is immediately scheduled on machine~$1$. 

\begin{lemma}
    \label{lem:priority}
    If $\pi$ is a static priority policy,
    then the induced policy $\pi_k$ is also a static priority policy, and it uses the same priority values as $\pi$ for all $k \in [m]$.
\end{lemma}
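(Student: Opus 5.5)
The plan is to show that, for each $t$, the set of jobs that $\pi_k$ processes in the unit slot $[t,t+1)$ of $A_k(I)$ is exactly the set of highest-priority jobs (under the priorities of $\pi$) among those available and uncompleted at time $t$ in $A_k(I)$. The number of such jobs is bounded by the number of machines present at $t$, and there is idleness only if fewer jobs are available. Since machines are identical and preemption is allowed, which job goes on which machine is irrelevant. This characterization is exactly what a static priority policy with these priorities does.

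\textbf{Step 1: windows.} First I identify which flow shop decisions are mapped to time $t$ in $A_k(I)$. A start time $\tau = m t' + i - 1$ is mapped to $t' + a_i^k$. So time $t$ receives the machines $i \le m-k+1$ with $t'=t$, and the machines $i > m-k+1$ with $t'=t-1$. Hence time $t$ of $A_k(I)$ corresponds to the window
\[
W_t=\{mt-k+1,\dots,mt+m-k\}\cap\mathbb{N}_0
\]
of consecutive flow shop start times. For $t\ge 1$ this window has length exactly $m$. For $t=0$ it is $\{0,\dots,m-k\}$, which matches the $m-k+1$ machines present at time $0$.

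Two consequences follow from the fact that loops of a job start at least $m$ apart (compare Lemma~\ref{lemma: time mapping}). Each job starts at most one loop in a window, so $\pi_k$ is well defined. Moreover, a loop started in $W_{t-1}$ ends, i.e.\ the job re-enters, at a time in $W_t$. Consequently, a job is uncompleted at time $t$ in $A_k(I)$ if and only if it has not been completed by $\pi$ before it would next be eligible in $W_t$.

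\textbf{Step 2: induction over $t$.} I prove by induction on $t$ that the set $S_t$ of jobs started by $\pi$ in $W_t$ consists of the $\min(|W_t|,\#\text{uncompleted jobs})$ highest-priority uncompleted jobs. For $t=0$ this holds because $\pi$ greedily picks the highest-priority jobs at each time of $W_0$.

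For the inductive step, let $R$ be the set of jobs in $S_{t-1}$ that are not completed after their loop. Since $\pi$ is non-interruptive, every job in $R$ re-enters during $W_t$ and is restarted immediately. By the induction hypothesis and static priorities, $R$ consists of the highest-priority jobs among those still uncompleted: every uncompleted job outside $S_{t-1}$ had lower priority than all of $S_{t-1}$. The remaining $m-|R|$ slots of $W_t$ are filled greedily by $\pi$ with the highest-priority jobs among the uncompleted jobs not in $R$. Such a job is available at each of these slots, because a job outside $S_{t-1}$ is not currently in the shop. Idleness occurs only if no such job exists. Hence $S_t$ is the top-priority set among the uncompleted jobs, which closes the induction.

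\textbf{Main obstacle.} The delicate part is the bookkeeping in the inductive step. I must check that a job in $R$ re-enters within $W_t$ rather than before it, and that the greedy fills inside $W_t$, possibly interleaved with the re-entries, never pick a lower-priority job while a higher-priority uncompleted job is waiting. I would handle this by relying on the non-interruptive property of $\pi$ stated above. That property means a re-entering job takes precisely the machine-$1$ slot at its re-entry time. The free slots of $W_t$ therefore form a fixed set, and $\pi$ fills them in priority order from a pool of available jobs that only shrinks as jobs are assigned.
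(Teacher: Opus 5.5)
Your proof is correct, but it is organized differently from the paper's. The paper's proof is two lines. Since $\pi$ is non-interruptive, each job is processed by $\pi_k$ on a single machine without preemption, so it suffices to compare first start times. Then $\pi$ starts first loops in strict priority order, and the monotone time map of Lemma~\ref{lemma: time mapping} turns $S^\pi_{j_1}<S^\pi_{j_2}$ into $S^{\pi_k}_{j_1}\le S^{\pi_k}_{j_2}$.

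You instead partition the flow shop start times into the windows $W_t$ mapped to slot $t$ of $A_k(I)$. You then show by induction that slot $t$ processes exactly the $\min(|W_t|,\#\text{uncompleted})$ highest-priority uncompleted jobs, where $|W_t|$ equals the number of machines present at $t$.

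The paper's route is shorter and reuses an already proved lemma. Yours costs more bookkeeping but checks the defining property of a priority policy directly in every slot. In particular, it covers a point the paper leaves implicit: $\pi_k$ never leaves a machine idle while an unstarted job waits, which weakly ordered start times alone do not exclude. Your induction hypothesis is also strong enough to derive non-interruptiveness rather than assume it. At the re-entry time of a job in $R$, every other available job lies outside $S_{t-1}$ and so has lower priority.

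Conversely, the paper's phrasing records explicitly that $\pi_k$ is non-preemptive on each machine. This is what is needed later to place WLEL in the class of non-preemptive policies. In your setup it is immediate: a job of $R$ restarted at $\tau+m$ keeps the residue of $\tau$ modulo $m$, and hence stays on the same machine in consecutive slots.
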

\begin{proof}
    First, observe that for all $k\in\left[m\right]$, $\pi_k$ is non-preemptive on each machine since $\pi$ is non-interruptive. Therefore, it suffices to focus on the starting times of jobs under policy~$\pi_k$ to show that at any time, a subset of available jobs with the highest priority values is scheduled, with ties broken as by $\pi$. Let $y$ be an arbitrary realization of processing times, and let $j_1$ and $j_2$ be jobs where $j_1$ has a higher priority value than $j_2$. We have $S^\pi_{j_1}(y) < S^\pi_{j_2}(y)$ and applying Lemma~\ref{lemma: time mapping} yields $S^{\pi_k}_{j_1} (y) \leq S^{\pi_k}_{j_2}(y)$, completing the argument.
\end{proof}

We introduce the static priority policies \emph{Most Expected Loops first} (MEL), \emph{Least Expected Loops first} (LEL), and \emph{Weighted Least Expected Loops first} (WLEL).
These policies are natural analogs of the established policies for parallel machine scheduling: Longest Expected Processing Time first (LEPT), Shortest Expected Processing Time first (SEPT), and Weighted Shortest Expected Processing Time first (WSEPT), and are defined via the priority values given in Table~\ref{tabel:static policies}.
Combined with Lemma~\ref{lem:priority}, this observation directly yields the following corollary. 

\begin{corollary}
    \label{cor:priority policies}
    MEL, LEL, and WLEL induce the policies LEPT, SEPT, and WSEPT, respectively.
\end{corollary}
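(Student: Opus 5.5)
The plan is to combine Lemma~\ref{lem:priority} with a direct comparison of the priority values in Table~\ref{tabel:static policies} against those defining LEPT, SEPT and WSEPT on the auxiliary instances $A_k(I)$.

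First, by Lemma~\ref{lem:priority}, for each $k \in [m]$ the policy $\pi_k$ induced by a static priority policy $\pi$ is itself a static priority policy on $A_k(I)$. It uses the same priority values as $\pi$, and ties are broken in the same way.

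Second, recall that by construction each job $j$ in $A_k(I)$ has stochastic processing time $X_j = Y_j$, so $\E[X_j] = \E[Y_j]$ and the weights $w_j$ are unchanged. The priority values then translate directly:
\begin{itemize}
\item MEL ranks jobs by $\E[Y_j]$ in decreasing order. On $A_k(I)$ this is decreasing $\E[X_j]$, which is exactly LEPT.
\item LEL ranks jobs by increasing $\E[Y_j]$, which becomes SEPT.
\item WLEL ranks jobs by decreasing $w_j/\E[Y_j]$, which becomes WSEPT's ratio $w_j/\E[X_j]$.
\end{itemize}
Since a static priority policy is determined entirely by its priority order (up to tie-breaking), the induced policies coincide with LEPT, SEPT and WSEPT on every $A_k(I)$.

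The only step needing care is the notion of a static priority policy on the auxiliary instances with unit-time preemption points. Lemma~\ref{lem:priority} shows that $\pi_k$ always runs a highest-priority subset of the available jobs on the machines that have arrived. LEPT, SEPT and WSEPT in the machine-arrival setting are understood in precisely this list-scheduling sense. There is no real obstacle here: the corollary is a relabeling once Lemma~\ref{lem:priority} is in hand.
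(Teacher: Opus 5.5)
Your proposal is correct and matches the paper's argument. Lemma~\ref{lem:priority} shows that each induced $\pi_k$ is a static priority policy using the same priority values, and since $X_j = Y_j$ with unchanged weights, the values in Table~\ref{tabel:static policies} are exactly those defining LEPT, SEPT and WSEPT on $A_k(I)$.
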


\begin{table*}[t] 
\renewcommand{\arraystretch}{1.2}
\centering
\label{tab:priority-policies} \begin{tabular}{| c c c|}
\hline
Flow-shop policy & Auxiliary policy & Priority value of job \(j\)\\ \hline
MEL & LEPT & $\mathbb{E}[Y_j]$ \\ LEL & SEPT & $\frac{1}{\mathbb{E}[Y_j]}$ \\ WLEL & WSEPT & $\frac{w_j}{\mathbb{E}[Y_j]}$\\
\hline
\end{tabular}
\caption{Static priority policies and their counterparts in the auxiliary problems.} 
\label{tabel:static policies}
\end{table*}

We now apply Theorem~\ref{thm:approximation} and Corollary~\ref{cor:priority policies} to derive an optimality result for the MEL and LEL policies under (independent and, in general, non-identical) geometric distributions. Our construction is used to transfer optimality results of \cite{glazebrook1979} and \cite{Weber1979} for the LEPT and SEPT policies for stochastic scheduling on identical parallel machines with integral machine arrivals under geometric distributions and the makespan and total completion time objectives, respectively. Since our auxiliary instances are special cases of this problem, we can apply Theorem~\ref{thm:approximation} with $\alpha=1$.

\begin{corollary}
    \label{corollary:LERL}
    If $Y_1, \ldots, Y_n$ are geometrically distributed, then the MEL policy minimizes the makespan objective in expectation, and the LEL policy minimizes the total completion time objective in expectation for \textsc{Flow Shop Scheduling with Stochastic Reentry}.
\end{corollary}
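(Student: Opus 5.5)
The plan is to apply Theorem~\ref{thm:approximation} with $\alpha = 1$, once for each of the two objectives. By Corollary~\ref{cor:priority policies}, MEL induces LEPT on every auxiliary instance $A_k(I)$, and LEL induces SEPT. Optimality of MEL for $\E[C_{\max}]$ therefore reduces to showing that LEPT minimizes the expected makespan on each $A_k(I)$, over all non-anticipatory preemptive policies for that instance. Likewise, optimality of LEL for $\E[\sum C_j]$ reduces to showing that SEPT minimizes the expected total completion time on each $A_k(I)$.

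The first step is to check that each $A_k(I)$ lies in the class covered by \cite{glazebrook1979} and \cite{Weber1979}. In $A_k(I)$ we have $m$ identical parallel machines, of which $m-k+1$ are available at time $0$ and the remaining $k-1$ arrive at time $1$. The number of available machines is thus a nondecreasing integer-valued step function with integer breakpoints. Job $j$ has processing time $X_j = Y_j$, which is geometric on $\{1,2,\dots\}$, and the $X_j$ are independent. Preemption is allowed.

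We must also address the time granularity. Because processing times are integer and geometric, and machines arrive only at integer times, it suffices to consider policies that make decisions at integer times. Decisions at integer times can then be matched to the discrete-time model of those papers, where machines are reassigned at every unit epoch. If one prefers an explicit argument, the standard inductive dynamic-programming proof also works. Memorylessness means the state is just the set of remaining jobs, together with the time relative to the arrival epoch. An interchange argument then shows that processing the jobs with the smallest success probability $p_j$, equivalently the largest $\E[Y_j] = 1/p_j$, is optimal for makespan, and the reverse order is optimal for flowtime. Here the fact that the machine count never decreases is exactly what makes the induction go through.

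With optimality of the induced policies established on every $A_k(I)$, we have $\E[\gamma(\pi_k)] = \E[\gamma(\textup{OPT}(A_k(I)))]$. Theorem~\ref{thm:approximation} with $\alpha=1$ then yields optimality of MEL and LEL in the flow shop.

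I expect the main obstacle to be the applicability check rather than the reduction itself. Concretely, I would need to confirm that the cited results allow non-identical geometric distributions together with arrivals at time $1$ only, and that their admissible policy class (preemption only at integer epochs) contains $\textup{OPT}(A_k(I))$ as defined here. The latter can be handled by noting that memorylessness makes preemption at non-integer times useless for integer-valued geometric jobs.
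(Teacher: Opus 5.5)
Your proposal is correct and follows the paper's own argument. Corollary~\ref{cor:priority policies} turns MEL and LEL into LEPT and SEPT on each $A_k(I)$, the results of \cite{glazebrook1979} and \cite{Weber1979} for integral machine arrivals and geometric processing times make these optimal there, and Theorem~\ref{thm:approximation} with $\alpha=1$ transfers optimality to the flow shop.

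You can drop your unproved claim that non-integer preemptions are useless. Every induced policy $\pi^*_k$ preempts only at integer epochs, so Proposition~\ref{cor:approximation restricted}, with $\Pi_k$ the integer-epoch policies and $\Pi$ all flow-shop policies, only needs LEPT/SEPT to be optimal within that discrete-time class.
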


Note that Corollary~\ref{corollary:LERL} shows that the non-interruptive, static policies MEL and LEL are optimal in the class of interruptive policies for the respective objective.
In fact, the statements of Corollary~\ref{corollary:LERL} can be generalized to a broader class of probability distributions, which \cite{Weber1979} calls monotone hazard rate models, and dynamic, interruptive versions of MEL and LEL.
The definition of this model and the more general statement are relegated to Section~\ref{sec:MHR}.

Since the deterministic special case of \textsc{Flow Shop Scheduling with Stochastic Reentry} is already NP-hard for the total weighted completion time objective \citep{aspern2025}, an efficiently computable and optimal policy cannot exist under this objective in general unless $P = NP$.
However, we can provide an approximation guarantee for the WLEL policy that holds for arbitrary probability distributions and depends only on their squared coefficients of variation.
In particular, for geometrically distributed $Y_j$'s, this gives a $\sqrt 2$-approximation.

\begin{corollary}
    \label{cor:jäger-skutella}
    If $\textup{Var}[Y_j]\leq \Delta \E[Y_j]^2$ for all $j \in [n]$ and some $\Delta \geq 0$, then the WLEL policy for \textsc{Flow Shop Scheduling with Stochastic Reentry} achieves an approximation ratio of $1 + \frac{1}{2}\left( \sqrt{2} - 1 \right) \left ( 1 + \Delta \right )$ in the class of non-interruptive policies for the total weighted completion time objective.
\end{corollary}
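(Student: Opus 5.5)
The plan is to invoke the reduction of Section~\ref{sec:reduction} and transfer the WSEPT guarantee of \cite{jager2018} from the auxiliary instances back to the flow shop. One adjustment is needed: the statement is relative to \emph{non-interruptive} policies, so Theorem~\ref{thm:approximation} cannot be applied verbatim. Instead I will rerun its proof with a restricted comparison class.

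Let $\pi^*$ be an optimal non-interruptive policy for $I$, and let $\pi^*_k$ be its induced policies on $A_k(I)$. As in the proof of Lemma~\ref{lem:priority}, $\pi^*$ being non-interruptive implies that each $\pi^*_k$ is non-preemptive. Hence $\pi^*_k$ is a feasible non-anticipatory non-preemptive policy for the parallel-machine problem $P|a_i, X_j\sim\textup{stoch}|\E[\sum w_jC_j]$, where $a_i\in\{0,1\}$ and $X_j=Y_j$.

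By Corollary~\ref{cor:priority policies}, the induced policy of WLEL on $A_k(I)$ is WSEPT. The key step is then the per-instance bound
\begin{equation*}
\E\Big[\sum_j w_jC_j^{\textup{WSEPT}}(A_k)\Big]\le \alpha\,\E\Big[\sum_j w_jC_j^{\textup{OPT}_{np}}(A_k)\Big]\le \alpha\,\E\Big[\sum_j w_jC_j^{\pi^*_k}\Big],
\qquad \alpha = 1+\tfrac12(\sqrt2-1)(1+\Delta),
\end{equation*}
where $\textup{OPT}_{np}$ denotes an optimal non-preemptive policy. Once this bound holds, summing over $k$ and using Theorem~\ref{theorem:policy} pathwise for both WLEL and $\pi^*$ gives the claim, exactly as in the proof of Theorem~\ref{thm:approximation}.

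The main obstacle is that the result of \cite{jager2018} is stated for identical parallel machines all available at time~$0$, whereas $A_k$ has $k-1$ machines arriving at time~$1$. I will handle this in one of two ways.

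\emph{Reprove the bound with arrivals.} Jäger--Skutella's analysis pairs a WSEPT upper bound with the LP lower bound of \cite{Mohring1999}. The upper bound charges each job its expected start time, namely the work of higher-priority jobs divided by $m$, plus its own processing time, with a refined treatment of the last job. The lower bound is the polymatroid-type LP inequality $\sum_{j\in S}\E[X_j]\E[C_j]\ge \frac{1}{2m}\big(\E[\sum_{j\in S} X_j]\big)^2+\frac12\sum_{j\in S}\E[X_j]^2-\frac{m-1}{2m}\sum_{j\in S}\textup{Var}[X_j]$. With arrivals in $\{0,1\}$, WSEPT's start times increase by at most $(k-1)/m\le 1$. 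The LP bound can be adjusted in the same direction: view each late machine as busy with a dummy job on $[0,1]$. Adding the dummy unit jobs at the highest priority, with zero weight, turns $A_k$ into a standard instance. Both WSEPT and the lower bound then treat the dummies identically, and since dummies are deterministic ($\Delta=0$ for them) the ratio is unaffected.

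\emph{Direct dummy-job reduction.} Alternatively, apply \cite{jager2018} directly to the augmented instance with dummy jobs of weight $\varepsilon\to0$. This requires checking that the optimal policy for the augmented instance may be assumed to schedule the dummies at time~$0$. I expect this exchange argument to be the delicate point, and I would fall back on the first route, redoing the Jäger--Skutella inequalities with the dummy work included, if it fails.

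Finally, for geometric $Y_j$ with success probability $p$, $\textup{Var}/\E^2=1-p\le1$. This gives $\Delta=1$ and the stated $\sqrt2$ bound.
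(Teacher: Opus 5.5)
Your outer reduction is the paper's: the restricted-class transfer (Proposition~\ref{cor:approximation restricted}), non-interruptive policies inducing non-preemptive ones, and WLEL inducing WSEPT. The gap is in the one substantive step, the WSEPT guarantee on $A_k$ with late machines, and the dummy-job device fails as stated.

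\textbf{Why the dummy device does not transfer the bound.} For WSEPT on the augmented instance to reproduce the schedule on $A_k$, the dummies must head the WSEPT order, i.e.\ $w_d\ge\max_j w_j/\E[X_j]$. With weight $0$ or $\varepsilon$, WSEPT runs them \emph{last}. If you force them first by fiat, the policy is no longer WSEPT, so \cite{jager2018} says nothing about it.

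With admissible weights, the dummies add $c=(k-1)w_d$ to the WSEPT cost and at most $c$ to the optimum. From $W+c\le\alpha(O+c)$ you only get $W\le\alpha O+(\alpha-1)c$, not $W\le\alpha O$. Adding the same mass to numerator and denominator pushes a ratio towards $1$, so the bound does not transfer back.

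You also placed the difficulty on the wrong side. That the optimum of the augmented instance is at most $\textup{OPT}(A_k)+c$ is trivial; it is the WSEPT side that breaks.

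Likewise, combining the list-scheduling upper bound with the LP bound of \cite{Mohring1999} only gives their ratio $1+\frac{(1+\Delta)(m-1)}{2m}$. The factor $\frac12(\sqrt2-1)$ comes from the deterministic Kawaguchi--Kyan bound, which is exactly the ingredient that is sensitive to machine arrivals.

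\textbf{Why no generic fix of this kind can work.} Your argument never uses that processing times are integers, and it goes through verbatim with dummies of any length $a$. So already for $\Delta=0$ it would prove the ratio $\frac12(1+\sqrt2)$ for $P\,|\,a_i\in\{0,a\}\,|\,\sum w_jC_j$ (equivalently, arrivals in $\{0,1\}$ with rational processing times). The paper's first counterexample gives ratio $\to\frac{3+\sqrt5}{4}>\frac{1+\sqrt2}{2}$ there.

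\textbf{What the paper does instead.} It keeps step (i) of J\"ager--Skutella, which is indifferent to arrivals: by Lemma~\ref{lemma: WSPT-WSEPT}, a $(1+\beta)$-guarantee for deterministic WSPT yields $1+\beta(1+\Delta)$ for WSEPT. It then re-proves the deterministic bound for $P\,|\,a_i\in\{0,1\},p_j\in\mathbb{Z}\,|\,\sum w_jC_j$ (Lemma~\ref{lemma: kawaguchi-kyan arrival times}). Dummies do appear there, but together with the compensation you are missing:
\begin{itemize}
\item First reduce to $w_j=p_j$, and to $n_b\ge m_1$ by adding or deleting late machines.
\item Replace each late machine by $h$ dummies of size and weight $1/h$; this costs an extra $\Delta_h\to\frac12 m_1$.
\item Offset this by splitting $m_1$ units of work processed before $T_{\textup{idle}}$ into pieces of size $1/h$. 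This saves $\Delta'_h\ge m_1(\frac12-\frac1{2h})$ in the WSPT schedule, while the optimum changes by at most $\Delta_h-\Delta'_h$, so the ratio cannot drop.
\end{itemize}
Integrality is what makes the offset available. $T_{\textup{idle}}$ is an integer, so each of the at least $m_1$ jobs started before it has a full unit of processing before $T_{\textup{idle}}$ that can be split. Without this step, or an equivalent one that exploits integrality, your proposal does not establish the corollary.
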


Again, this proof follows from Theorem~\ref{thm:approximation} (or rather its restricted-class formulation stated in Proposition~\ref{cor:approximation restricted}) and Corollary~\ref{cor:priority policies} combined with the approximation result for the auxiliary instances in Theorem~\ref{thm: generalized Kawaguchi-Kyan}.
Unlike the optimality results for the auxiliary instances used in Corollary~\ref{corollary:LERL}, this result was not previously known in the literature, and details are established below.

To formulate Theorem~\ref{thm:approximation} for a restricted policy class, let $\Pi$ denote a class of policies for \textsc{Flow Shop Scheduling with Stochastic Reentry} and~$\Pi_k$ a class of policies for the auxiliary problem~$A_k$, $k\in[m]$. We again consider objective functions $\gamma \in \{\sum C_j, \sum w_jC_j, C_{\max}\}$ and let $\textup{OPT}(A_k(I), \Pi_k)$ denote the optimal policy in the class $\Pi_k$.
\begin{proposition}
\label{cor:approximation restricted} 
    If $E[\gamma(\pi_k)] \leq \alpha \E[\gamma(\textup{OPT}(A_k(I), \Pi_k))]$ for all instances $I$ and all $k \in [m]$, and if for all $\pi' \in \Pi$ and all $k \in [m]$ the induced policy satisfies $\pi'_k \in \Pi_k$, then policy $\pi \in \Pi$ is an $\alpha$-approximation for \textsc{Flow Shop Scheduling with Stochastic Reentry} in the policy class $\Pi$.
\end{proposition}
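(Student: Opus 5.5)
The plan is to repeat the proof of Theorem~\ref{thm:approximation} essentially verbatim, with the global optimum replaced by the optimum over $\Pi$. The only point needing care is where optimality over $\Pi_k$ is invoked, since we must compare against a policy that actually lies in $\Pi_k$.

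Fix an instance $I$. Let $\pi^* \in \Pi$ be a policy minimizing $\E[\gamma(\cdot)]$ over $\Pi$; if the infimum is not attained, take a sequence of policies in $\Pi$ whose values approach it. By hypothesis, the induced policies satisfy $\pi^*_k \in \Pi_k$ for every $k\in[m]$. Since $\textup{OPT}(A_k(I),\Pi_k)$ is optimal within $\Pi_k$, the hypothesis on $\pi$ gives
\[
\E[\gamma(\pi_k)] \le \alpha\, \E[\gamma(\textup{OPT}(A_k(I),\Pi_k))] \le \alpha\, \E[\gamma(\pi^*_k)] \quad \text{for all } k\in[m].
\]
This is precisely the step that needs the closure assumption $\pi'_k\in\Pi_k$. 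In the unrestricted theorem it held automatically, because every induced policy is feasible and non-anticipatory by Theorem~\ref{theorem:policy}.

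Next, use the aggregation identities from the proof of Theorem~\ref{thm:approximation}, which hold pathwise for every feasible flow shop policy. For the weighted completion time, Theorem~\ref{theorem:policy} gives $\sum_j w_jC^{\pi}_j(y)=\sum_k\sum_j w_jC^{\pi_k}_j(y)$, and the same identity holds for $\pi^*$. The case $\sum C_j$ is the special case of unit weights. For the makespan, Lemma~\ref{lemma: time mapping} shows that the job attaining the makespan under $\pi$ also attains it in every $A_k(I)$, so $C_{\max}^{\pi}(y)=\sum_k C_{\max}^{\pi_k}(y)$, and likewise for $\pi^*$.

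Taking expectations over $y$ and summing the per-$k$ inequalities over $k$ gives
\[
\E[\gamma(\pi)]=\sum_{k=1}^m \E[\gamma(\pi_k)] \le \alpha\sum_{k=1}^m \E[\gamma(\pi^*_k)]=\alpha\,\E[\gamma(\pi^*)].
\]
Since $\pi\in\Pi$, this shows that $\pi$ is an $\alpha$-approximation within $\Pi$. If the infimum was not attained, apply the bound along the approximating sequence and pass to the limit.

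No step here is a real obstacle. The only points to verify are that the aggregation identities use nothing beyond feasibility (they do not), and that membership $\pi^*_k\in\Pi_k$ is exactly what licenses the comparison with $\textup{OPT}(A_k(I),\Pi_k)$.
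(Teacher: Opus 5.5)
Your proposal is correct and matches the paper's proof. In both, the closure hypothesis puts $\pi^*_k$ in $\Pi_k$, which gives the chain $\E[\gamma(\pi_k)] \le \alpha\,\E[\gamma(\textup{OPT}(A_k(I),\Pi_k))] \le \alpha\,\E[\gamma(\pi^*_k)]$, and the pathwise aggregation identities from the proof of Theorem~\ref{thm:approximation} then finish the argument exactly as you describe; your approximating-sequence remark for an unattained optimum over $\Pi$ is a harmless extra detail the paper leaves implicit.
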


 The proof follows analogously to the proof of Theorem~\ref{thm:approximation},where the following inequality follows, for all $k \in [m]$, from the membership of $\pi^*_k$ in the policy class $\Pi_k$:
    \begin{equation*}
    \E[\gamma(\pi_k)] \leq \alpha \E[\gamma(\textup{OPT}(A_k(I), \Pi_k))] \leq \alpha \E[\gamma(\pi^*_k)].
    \end{equation*}
As observed in Lemma~\ref{lem:priority}, any non-interruptive policy $\pi$ for \textsc{Flow Shop Scheduling with Stochastic Reentry} induces a non-preemptive policy $\pi_k$ for all $k \in [m]$. Thus, the following result completes the proof of Corollary~\ref{cor:jäger-skutella}. Let $X_j$ be an integer-valued random variable for all $j \in [n]$.

\begin{theorem}
\label{thm: generalized Kawaguchi-Kyan}
     If $\textup{Var}[X_j]\leq \Delta \E[X_j]^2$ for all $j \in [n]$ and some $\Delta \geq 0$, then the WSEPT policy for $P | a_i \in \{0,1\}, X_j \sim \textup{stoch} | \sum w_jC_j$ has an approximation ratio of $1 + \frac{1}{2}\left( \sqrt{2} - 1 \right) \left ( 1 + \Delta \right )$ in the class of non-preemptive policies.
\end{theorem}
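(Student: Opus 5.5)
The plan is to reduce the machine arrivals to ordinary jobs, and then rerun the analysis of \cite{jager2018} for WSEPT on identical parallel machines. That analysis is a Kawaguchi--Kyan-type worst-case argument compared against the polymatroidal LP lower bound of \cite{Mohring1999}. The arrivals should only change the load terms by a fixed ``initial block''.

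\textbf{Step 1 (arrivals as dummy jobs).} Consider an instance with $m-k+1$ machines available at time~$0$ and $k-1$ machines arriving at time~$1$. Add $k-1$ deterministic dummy jobs of length~$1$. In the arrival instance, every non-preemptive policy corresponds exactly to a non-preemptive policy on $m$ machines that all start at time~$0$ and that starts the dummies at time~$0$. Because all $X_j$ are integer-valued, every decision epoch of such a policy is an integer. So WSEPT on the arrival instance coincides, realization by realization, with ``dummies first, then WSEPT'' on $m$ machines. The dummies are deterministic, so $\textup{Var}=0\le\Delta\E[\cdot]^2$, and they never violate the variance assumption.

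The dummies cannot simply be given a huge weight: that would add an additive constant to both sides and destroy the ratio. I will therefore not charge them in the objective. Instead, I keep them only inside the constraints of the lower bound.

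\textbf{Step 2 (upper bound for WSEPT).} Order the real jobs by WSEPT priority. Since the policy is non-preemptive and list-scheduled, for every realization
\begin{equation*}
S_j \le \frac{1}{m}\Big((k-1)+\sum_{h<j} X_h\Big),
\end{equation*}
where the sum runs over the real jobs $h$ that WSEPT starts before~$j$. The dummy load $k-1$ plays exactly the role of the idle time before the machine arrivals. Taking expectations and using independence of $X_j$ from the earlier decisions gives
\begin{equation*}
\E[C_j]\le \frac{1}{m}\Big(k-1+\sum_{h<j}\E[X_h]\Big)+\E[X_j].
\end{equation*}
I will keep the refined Jäger--Skutella form of this bound, which charges only $(1-\frac1m)$ of the own processing time and uses the geometric worst-case structure, rather than this crude version.

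\textbf{Step 3 (lower bound for any non-preemptive policy).} Apply the inequalities of \cite{Mohring1999} to the job set including the dummies. For every subset $A$ of real jobs together with the set $D$ of dummies,
\begin{equation*}
\sum_{j\in A\cup D}\E[X_j]\,\E[C_j]\ \ge\ \frac{1}{2m}\Big(\sum_{j\in A\cup D}\E[X_j]\Big)^2+\frac12\sum_{j\in A\cup D}\E[X_j]^2-\frac{m-1}{2m}\sum_{j\in A\cup D}\textup{Var}[X_j].
\end{equation*}
In this inequality the dummy completion times are fixed: $\E[C_j]=1$ for $j\in D$.

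Subtracting the dummy terms yields valid inequalities for the real jobs alone. These are LP constraints whose right-hand sides are shifted by the initial load $k-1$. Using $\textup{Var}[X_j]\le\Delta\E[X_j]^2$, the variance term becomes $\frac{(m-1)\Delta}{2m}\sum\E[X_j]^2$, as in \cite{jager2018}.

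\textbf{Step 4 (worst-case comparison, the main obstacle).} Compare the Step~2 upper bound with the Step~3 LP value, using the same reduction as Jäger and Skutella: reduce to instances with equal ratios $w_j/\E[X_j]$, then to a continuous ``area'' argument in the spirit of Kawaguchi--Kyan. The hard part is to check that the fixed initial dummy block of total area $k-1$ does not worsen the ratio.

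I would try to show that this block contributes identically to the ``start-time'' parts of both bounds. In the upper bound it appears as the constant $\frac{k-1}{m}$ added to every $S_j$. In the LP it appears as the cross term $\frac{1}{m}(k-1)\sum_{j\in A}\E[X_j]$ obtained from expanding the square, and with the equal-ratio weighting this is the same quantity. After cancelling this common term, what remains is exactly the expression bounded in \cite{jager2018}. That gives the ratio $1+\frac12(\sqrt2-1)(1+\Delta)$.

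If the cancellation is not exact, I would fall back on treating the dummies as genuine jobs of some weight $\varepsilon\to0$ in the equal-ratio reduction, and argue monotonicity of the ratio in the added load.
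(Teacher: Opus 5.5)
\textbf{There is a genuine gap in your Step 4.} The whole proof rests on that step, and the exact cancellation you hope for does not occur. Write $P(A)=\sum_{j\in A}\E[X_j]$ and subtract the dummies' true contribution $k-1$ from the inequality for $A\cup D$. This gives
\[
\sum_{j\in A}\E[X_j]\E[C_j]\ \ge\ \frac{1}{2m}P(A)^2+\frac12\sum_{j\in A}\E[X_j]^2-\frac{m-1}{2m}\sum_{j\in A}\mathrm{Var}[X_j]+\frac{k-1}{m}P(A)-\frac{(k-1)(m-k+1)}{2m}.
\]
Besides the cross term, there is a negative constant that lowers the lower bound and has no counterpart in your upper bound.

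Moreover, the quantity J\"ager and Skutella bound is not a comparison of the list-scheduling bound against the polymatroid LP. For $\Delta=0$ and $m$ unit jobs, your Step~2 bound is $m+\frac{m-1}{2}$ while the LP value is $m$. That comparison therefore cannot beat the M\"ohring--Schulz--Uetz ratio $\frac32-\frac1{2m}$. Their improvement instead comes from relating WSEPT to the approximation ratio of deterministic WSPT against the true optimum, i.e.\ to Kawaguchi--Kyan. So no ready-made residual expression is left after any cancellation.

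The decisive problem is that nothing in your Steps 1--4 uses that the late machines arrive at exactly time $1$ relative to integral job lengths. Your integrality remark in Step~1 is not needed for the dummy equivalence. With dummies of length $a$, every step goes through verbatim, and all added terms simply scale by $a$. Your argument would therefore prove the $\frac12(1+\sqrt2)$ bound already for deterministic instances with $a_i\in\{0,a\}$. The paper's first Example refutes this: the ratio tends to $\frac{3+\sqrt5}{4}$. The fallback also fails: with dummies of weight $\varepsilon\to0$, WSEPT schedules them last, not first.

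\textbf{The missing idea} is how to absorb the unit idle block without hurting the ratio, and this is exactly where integrality and the unit arrival time enter. The paper separates the two layers.
\begin{itemize}
\item The J\"ager--Skutella transfer carries over unchanged with arrivals: if deterministic WSPT is a $(1+\beta)$-approximation, then WSEPT achieves $1+\beta(1+\Delta)$.
\item The real work is a deterministic Kawaguchi--Kyan bound for $a_i\in\{0,1\}$, $p_j\in\Z$, with $w_j=p_j$.
\end{itemize}
The deterministic bound is proved in three moves.
\begin{enumerate}
\item First ensure that at least $m_1$ jobs start before the first idle time $T_{\textup{idle}}$ of WSPT. If $T_{\textup{idle}}=1$, add late machines and discard jobs that start at time $1$ in both WSPT and an optimal schedule.
\item Fill each late machine's slot $[0,1]$ with $h$ dummy jobs of size and weight $1/h$. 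These are \emph{charged} in the objective, costing $m_1(\frac12+\frac1{2h})$ in WSPT and at most that in the optimum.
\item Offset this by splitting one unit of processing from each of $m_1$ jobs run before $T_{\textup{idle}}$ into pieces of size $1/h$. This saves at least $m_1(\frac12-\frac1{2h})$ in WSPT and can only help the optimum.
\end{enumerate}
Integrality guarantees that each such job has a full unit processed before $T_{\textup{idle}}$, and the unit arrival time makes the split-off mass match the dummy mass. Letting $h\to\infty$ yields instances of $P\,||\,\sum w_jC_j$ with no smaller ratio, to which Kawaguchi--Kyan applies.
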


The proof of Theorem~\ref{thm: generalized Kawaguchi-Kyan} is an extension of a result with identical machine arrival times \citep{jager2018}, combined with ideas of job insertion and splitting \citep{aspern2025}. Technical details are given in Section~\ref{sec:WSEPT}.
The approximation guarantee of $\sqrt{2}$ derived from Corollary~\ref{cor:jäger-skutella} holds for many practically motivated distributions for \textsc{Flow Shop Scheduling with Stochastic Reentry} beyond geometric distributions, see Section~\ref{sec:Binomial}.

\section{Proof of Theorem~\ref{thm: generalized Kawaguchi-Kyan}}
\label{sec:WSEPT}

In this section, we generalize the work of \cite{jager2018} to prove Theorem~\ref{thm: generalized Kawaguchi-Kyan}.
We begin by providing a brief overview of their ideas before presenting our proof.
\cite{jager2018} analyze the \emph{Weighted Shortest Expected Processing Time first} (WSEPT) policy for the stochastic problem $P\lvert X_j \sim \textup{stoch}\rvert\E[\sum w_jC_j]$ and establish a performance guarantee of $1 + \frac{1+\Delta}{2} (1 + \min\{2, \sqrt{2 + 2\Delta}\})^{-1}$, where $\Delta$ is an upper bound on the squared coefficient of variation of the processing times.
They establish this bound in two main steps:
(i) relating the performance guarantee of the stochastic WSEPT policy to the deterministic WSPT policy, and
(ii) improving the guarantee by analyzing $\alpha$-completion times for $\alpha\in\left[0,1\right]$, defined as the time $C_j(\alpha)$ an $\alpha$-fraction of a job $j$'s processing time $p_j$ has been completed.

Our contributions in this section are threefold.
First, we observe that the first step in the analysis of \cite{jager2018} immediately generalizes to the problem with (arbitrary) machine arrivals.
Second, we show in Section~\ref{sec:Kawaguchi-Kyan Bound} that the tight upper bound of $\nicefrac{1}{2}\left( 1 + \sqrt{2}\right)$ for the performance guarantee of the WSPT policy due to \cite{kawaguchi1986} and identical machine arrivals transfers over to the problem with binary machine arrival times and integral processing times.
These two results together yield Theorem~\ref{thm: generalized Kawaguchi-Kyan}.
We then show that the restrictions of the machine arrival and processing times are necessary by providing a counterexample in the general setting in Section~\ref{sec:Counterexamples}.
Third, we show that even under these restrictions, the performance guarantee of WSPT deteriorates for the objective function $\sum w_jC_j(\alpha)$.
Consequently, the second step of the analysis of \cite{jager2018} does not generalize to the problem with binary machine arrivals.

\begin{lemma}
\label{lemma: WSPT-WSEPT}
    If the WSPT policy for $P | a_i| \sum w_jC_j$ is a $(1 + \beta)$-approximation, then the WSEPT policy for $P | a_i, X_j \sim \textup{stoch} | \E[\sum w_j C_j]$ has a performance guarantee of $1 + \beta (1 + \Delta)$ in the class of non-preemptive policies.
\end{lemma}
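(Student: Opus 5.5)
We follow the first step of \cite{jager2018} and adapt it to machine arrivals. The argument compares three quantities: an upper bound on the expected WSEPT cost, the cost of WSPT on a deterministic instance built from expected processing times, and a lower bound on any non-preemptive policy that is linear in the variances.

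\textbf{Step 1: upper bound for WSEPT.} Index jobs in WSEPT order, i.e., by nonincreasing $w_j/\E[X_j]$. Under the non-preemptive list policy with machine arrivals, job~$j$ starts on the first machine that becomes free once jobs $1,\dots,j-1$ have been started. Its start time is at most the average, over the machines, of the earliest free time. Summing total arrival times and total work of predecessors gives
\begin{equation*}
    S_j \le \frac{1}{m}\Bigl(\sum_{i=1}^m a_i + \sum_{h<j} X_h\Bigr).
\end{equation*}
Plain averaging loses too much here. Instead, we reproduce the structure of \cite{jager2018}: condition on the realizations of the other jobs and use independence of $X_j$ from the start time. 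This expresses $\E[\sum w_j C_j^{\mathrm{WSEPT}}]$ as the deterministic WSPT cost on the instance $p_j=\E[X_j]$ plus a correction term in the variances. To do this we will use a Kawaguchi--Kyan style decomposition of WSPT cost: the cost of a list schedule is bounded by an expression that depends only on the prefix sums $\sum_{h\le j}p_h$ and on the arrival times. With arrivals this becomes an expression in $\sum_i a_i$, $\sum_{h<j}p_h$ and $p_j$.

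\textbf{Step 2: lower bound for any non-preemptive policy.} For any non-preemptive policy $\pi$ on instance $\mathbb{I}$, we establish the lower bound
\begin{equation*}
    \E\Bigl[\sum_j w_j C^\pi_j\Bigr]\ \ge\ \mathrm{OPT}_{\mathrm{det}}(p=\E[X]) \;-\; \frac{m-1}{2m}\sum_j w_j\frac{\mathrm{Var}[X_j]-\E[X_j]}{\E[X_j]}\cdot(\ldots),
\end{equation*}
in the form used by \cite{jager2018}, built on the Möhring--Schulz--Uetz polyhedral inequalities
\begin{equation*}
    \sum_{j\in A}\E[X_j]\E[S_j]\ \ge\ \frac{1}{2m}\Bigl(\bigl(\textstyle\sum_{A}\E[X_j]\bigr)^2-\sum_A \E[X_j]^2\Bigr)-\frac{m-1}{2m}\sum_A\mathrm{Var}[X_j].
\end{equation*}
With machine arrivals, machine~$i$ is simply unavailable before $a_i$. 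We model this by inserting one dummy deterministic job of length $a_i$ that occupies machine~$i$ at time~$0$ and has weight~$0$ (the job-insertion idea of \cite{aspern2025}). This turns the instance into a standard $P\,|\,X_j\sim\textup{stoch}\,|$ instance, so the validity of the inequalities for real jobs transfers. The deterministic bound then gives $\beta$ times the optimum as the excess of WSPT over the deterministic lower bound.

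\textbf{Step 3: combine.} Following \cite{jager2018}, we first bound the excess of WSEPT over the LP value by $\beta$ times the deterministic term. Then, using $\mathrm{Var}[X_j]\le\Delta\E[X_j]^2$, we bound the variance contributions by $\beta\Delta$ times the same quantity. This yields a total ratio of $1+\beta(1+\Delta)$.

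The main obstacle is Step 2. We must check that the dummy-job insertion is legitimate, that is, that a policy for the arrivals instance corresponds to a non-preemptive policy on the augmented instance where the dummy jobs are forced first. We must also check that the hypothesis ``WSPT is a $(1+\beta)$-approximation with arrivals'' is actually applicable to the expected-value deterministic instance, and not to the augmented one. If the insertion breaks the argument, the fallback is to rederive the Möhring--Schulz--Uetz inequality directly with an additive $\sum_i a_i$ term in the work-conservation argument.
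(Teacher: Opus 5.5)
There is a genuine gap: your argument never uses the hypothesis in a way that could produce $1+\beta(1+\Delta)$. The hypothesis compares WSPT with the deterministic \emph{optimum}. Your Steps 2--3 instead compare an averaging upper bound on WSEPT with the M\"ohring--Schulz--Uetz LP value (and $\sum_j w_j\E[X_j]$), and these lower bounds can lie a constant factor below the optimum even when WSPT is optimal. Take $\Delta=0$, all $a_i=0$, one job with $w=p=1$, and many tiny jobs with $w_j=p_j$ of total size $\sqrt{m-1}$. Every order is then a WSPT order, and as $m\to\infty$ both WSPT and OPT tend to $\tfrac32$. Meanwhile the LP value is $\frac{1}{2m}P^2+\frac12\sum_j p_j^2\approx 1+\frac{\sqrt{m-1}}{m}$ and $\sum_j w_jp_j\approx 1$. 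So the Step~3 claim, that the excess of WSEPT over the LP value is at most $\beta$ times something bounded by OPT, is false for $\beta=\frac12(\sqrt2-1)$. Your chain of bounds cannot beat roughly $\tfrac32$ even on deterministic instances. The inequality you quote is also misstated: the coefficient of $\sum_A\E[X_j]^2$ must be $\frac12$, not $\frac{1}{2m}$, and two unit jobs on two machines already violate your version. Step~1 has no support either. List-scheduling start times are neither convex nor concave in the realized processing times, so $\E[\sum_j w_jC_j^{\mathrm{WSEPT}}]$ cannot be written as the WSPT cost of the instance $p_j=\E[X_j]$ plus a variance correction.

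The missing idea is the step of \cite{jager2018} that the paper invokes: apply the deterministic guarantee realization by realization, to \emph{reweighted} instances. For a realization $x$, let $J_x$ have processing times $x_j$, the same arrivals $a_i$, and weights $w'_j=w_jx_j/\E[X_j]$. Since $w'_j/x_j=w_j/\E[X_j]$, the WSEPT list schedule on $x$ is a WSPT schedule for $J_x$. The schedule of an optimal non-preemptive policy $\pi$ on $x$ is feasible for $J_x$. Hence $\sum_j w'_jC_j^{\mathrm{WSEPT}}(x)\le(1+\beta)\sum_j w'_jC^\pi_j(x)$. For any non-preemptive non-anticipatory policy, $S_j$ is independent of $X_j$, so $\E[X_jC_j]=\E[X_j]\E[S_j]+\E[X_j]^2+\mathrm{Var}[X_j]$. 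Taking expectations, both sides therefore pick up the same additive term $V=\sum_j w_j\mathrm{Var}[X_j]/\E[X_j]$, which yields $\E[\sum_j w_jC_j^{\mathrm{WSEPT}}]\le(1+\beta)\E[\sum_j w_jC_j^\pi]+\beta V$. Finally, $V\le\Delta\sum_j w_j\E[X_j]\le\Delta\,\E[\sum_j w_jC^\pi_j]$ because $C_j\ge X_j$. This argument does not depend on the machine environment, so arrivals need no dummy jobs. It also shows exactly where the non-preemptive restriction enters: it is what makes $S_j$ independent of $X_j$.
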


Because the proof does not depend on the machine environment, it is exactly analogous to the proof contained in \citep{jager2018}, and we shall omit the details here.

\subsection{Generalization of the Kawaguchi-Kyan Bound}
\label{sec:Kawaguchi-Kyan Bound}

We now show that the well-known Kawaguchi-Kyan bound generalizes to the problem setting with binary machine arrival times.
This result replaces the original result by \citet{kawaguchi1986} in the framework of \citet{jager2018} and, together with Lemma~\ref{lemma: WSPT-WSEPT}, completes the proof of Theorem~\ref{thm: generalized Kawaguchi-Kyan}.

\begin{lemma}
\label{lemma: kawaguchi-kyan arrival times}
    The WSPT policy is a $\frac{1}{2}\left( 1 + \sqrt{2}\right)$ approximation for the scheduling problem $P | a_i \in \{0,1\}, p_j \in \mathbb{Z}| \sum w_jC_j$.
\end{lemma}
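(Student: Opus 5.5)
The plan is to reduce the instance with binary machine arrivals to an instance with identical machine arrivals, where the original Kawaguchi--Kyan bound applies. The reduction uses the job insertion idea of \cite{aspern2025}. Let $I$ be an instance of $P \mid a_i \in \{0,1\}, p_j \in \mathbb{Z} \mid \sum w_jC_j$ with $m$ machines, of which $k-1$ arrive at time~$1$. We build an auxiliary instance $I'$ on $m$ machines, all arriving at time~$0$, by adding $k-1$ dummy jobs of unit processing time. Each dummy job gets a weight-to-length ratio strictly larger than that of every original job, say ratio $M$ very large, or infinite in the limit. WSPT on $I'$ then starts the $k-1$ dummy jobs at time~$0$ on $k-1$ machines. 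Because all processing times are integral and the machines are identical, these machines become free at time~$1$. The WSPT schedule of the original jobs in $I'$ is therefore exactly the WSPT schedule on $I$, with ties broken consistently. Integrality matters here: on the other machines every original job starts and completes at integer times, so no event falls strictly inside $(0,1)$ that could let the list schedule diverge.

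In the same way, any schedule for $I$, in particular an optimal one, extends to a schedule for $I'$ by putting the dummy jobs in $[0,1]$ on the late-arriving machines. Hence
\[
\mathrm{OPT}(I') \le \mathrm{OPT}(I) + D, \qquad \mathrm{WSPT}(I') = \mathrm{WSPT}(I) + D,
\]
where $D = \sum_{\text{dummy}} w_d \cdot 1$ is the dummy contribution. The Kawaguchi--Kyan bound applied to $I'$ gives $\mathrm{WSPT}(I)+D \le \rho(\mathrm{OPT}(I)+D)$ with $\rho=\frac12(1+\sqrt2)$. This only yields $\mathrm{WSPT}(I) \le \rho\,\mathrm{OPT}(I) + (\rho-1)D$, so the additive dummy term must be killed.

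This is the main obstacle. A large dummy weight forces the dummies to be scheduled first, but it also inflates $D$. The fix I would try first is to give the dummy jobs a tiny processing time and weight, instead of forcing them first by weight. Their priority ratio stays maximal while their weight tends to zero, so $D\to 0$. This breaks the argument above, because the dummies now occupy only $[0,\varepsilon]$, not $[0,1]$. To repair it, each unit dummy job is replaced by a chain that fills $[0,1]$ with total weight $\to 0$ but ratio above all real jobs. For example, use $N$ dummy jobs of length $1/N$ each on a machine. This is the job-splitting idea of \cite{aspern2025}. The total weight of the dummies is then $\sim N \cdot w$, with $w/(1/N)$ large. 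For the weight to vanish, the condition is $N w \to 0$ while $wN \gg \max_j w_j/p_j$, and these two requirements are contradictory.

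Since that scaling fails, I would instead avoid the additive loss by redoing the Kawaguchi--Kyan worst-case analysis directly. I would follow the standard reduction of Kawaguchi and Kyan / Schwiegelshohn to instances with equal ratios $w_j=p_j$, which is done by ratio-splitting and holds for any list schedule independent of arrivals. In that case the objective is $\sum p_jC_j$, which equals $\frac12\sum p_j^2 + \int (\text{unfinished work})$. The dummy work placed in $[0,1]$ is absorbed into a geometric area argument: both the WSPT and the optimal objective equal an integral of remaining work over time. The blocked region $[0,1]$ on the $k-1$ machines is treated as pre-scheduled ratio-$1$ work common to both schedules. The additive term then appears identically on both sides and the ratio inequality still goes through.

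I would verify this last step carefully, since it is where binary arrivals and integrality are really used. Integrality ensures the blocked region aligns with job boundaries. The counterexample section indicates that the step fails otherwise.
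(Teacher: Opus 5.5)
\textbf{There is a genuine gap: your last step does not cancel the additive term.}

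Your first two steps are sound. Forcing unit dummy jobs onto the late machines reproduces the WSPT schedule of $I$ exactly. (This holds even without integrality, so that is not where integrality is needed.) You also diagnose correctly that this only yields $\mathrm{WSPT}(I)\le\rho\,\mathrm{OPT}(I)+(\rho-1)D$, with $D$ bounded away from $0$: after normalizing to $w_j=p_j$, even finely split ratio-$1$ dummies cost about $\frac12$ per late machine.

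The gap is the final step. You treat the blocked region as ratio-$1$ work ``common to both schedules'', so that ``the additive term appears identically on both sides''. That is exactly the situation you already had. From $(W+D)/(O+D)\le\rho$ with $W\ge O$ and $D>0$ you cannot conclude $W/O\le\rho$, because adding the same positive amount to numerator and denominator lowers the ratio. You do not actually carry out a redone Kawaguchi--Kyan analysis with a pre-blocked region. Nothing in your sketch uses $a_i\in\{0,1\}$ or $p_j\in\mathbb{Z}$ substantively, yet the bound fails for arrivals $\{0,a\}$.

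\textbf{The missing idea} is to offset the dummy cost with an equally large saving that affects both schedules identically, and then apply Kawaguchi--Kyan as a black box.
\begin{itemize}
\item After normalizing, all ratios equal $1$, so any list order is a WSPT order. Your dummies therefore need no large ratio; they can be inserted by tie-breaking.
\item Fill $[0,1]$ on each of the $m_1$ late machines with $h$ dummies of length and weight $1/h$. This adds exactly $\Delta_h=m_1(\frac12+\frac1{2h})$ to WSPT and at most that to OPT.
\item Simultaneously, split $m_1$ units of real processing that WSPT performs before its first idle time $T_{\mathrm{idle}}$ into pieces of length and weight $1/h$. Splitting a job with $w=p$ into consecutively processed pieces lowers its cost by $\frac12(p^2-\sum_i p_i^2)$ in every schedule. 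Each split unit comes from a job of integral length $p\ge1$ and saves $p-\frac12-\frac1{2h}\ge\frac12-\frac1{2h}$. This gives a saving $\Delta'_h\ge m_1(\frac12-\frac1{2h})$ in WSPT, and OPT drops by at least as much.
\item Splitting only before $T_{\mathrm{idle}}$ keeps the WSPT list schedule unchanged. The net shift $\Delta_h-\Delta'_h$ tends to a value $\le0$, which can only increase the ratio. Hence $\lambda(I)$ is bounded by the limit of the ratios on the resulting $P||\sum w_jC_j$ instances, each at most $\frac12(1+\sqrt2)$.
\end{itemize}
This is precisely where unit arrival times and integrality enter: one split unit of a job of length at least $1$ saves about $\frac12$, matching the $\frac12$ cost of one unit of dummies.

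You also need a preprocessing step guaranteeing that at least $m_1$ jobs run before $T_{\mathrm{idle}}$. This is automatic if $T_{\mathrm{idle}}\ge2$. If $T_{\mathrm{idle}}=1$, add late machines until $n=m$; WSPT is unchanged and OPT can only drop. Then delete the jobs that start at time $1$ in both schedules, together with an equal number of late machines.
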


\begin{proof}
The idea is as follows: Given an instance $I$ of $P | a_i \in \{0,1\}, p_j \in \mathbb{Z}| \sum w_jC_j$, we construct a sequence of instances~$I_h$, $h\in\N$, of $P | | \sum w_jC_j$ such that:
\begin{equation*}
    \lambda(I) \coloneqq \frac{\sum_j w_j C_j^{\textup{WSPT}(I)}}{\sum_j w_j C_j^{\textup{OPT}(I)}} \leq \lim_{h\rightarrow\infty}\frac{\sum_j w_j C_j^{\textup{WSPT}(I_h)}}{\sum_j w_j C_j^{\textup{OPT}(I_h)}}.
\end{equation*}
The statement then follows directly from \citep{kawaguchi1986}.
By applying the same arguments as in \cite[Corollary 1]{Schwiegelshohn2011}, we can assume without loss of generality that $w_j=p_j$ for all jobs~$j$. Furthermore, we assume $n \geq m$; otherwise, consider an instance with the same set of jobs on fewer machines, preserving the ratio.
Let $m_1$ denote the number of machines that arrive at time $1$, $T_{\textup{idle}}$ the first time at which an available machine becomes idle under the WSPT policy, and $n_b$ the number of jobs started before~$T_{\textup{idle}}$ under the WSPT policy. We perform our construction in two main steps.

In the first step, we construct an instance $I^{(2)}$ with $n^{(2)}_b \geq  m^{(2)}_1$ and $\lambda(I) \leq \lambda(I^{(2)})$. These properties enable the second step to be applied. If $T_{\textup{idle}}\geq 2$, then it follows that $n_b \geq m \geq m_1$ and $I^{(2)} = I$ has the desired properties.
If $T_{\textup{idle}} = 1$, then all jobs either start at time $0$ or at time $1$ in the WSPT schedule. Therefore, adding $n-m$ machines with an arrival time of $1$ does not change the cost of the WSPT schedule, but it may decrease the cost of an optimal schedule. Thus, for this augmented instance, which we denote~$I^{(1)}$, we have $n^{(1)} = m^{(1)}$ and $\lambda(I) \leq \lambda(I^{(1)})$. 
If $n^{(1)}_b \geq m^{(1)}_1$, then $I^{(2)} = I^{(1)}$ has the desired properties.
Otherwise, without loss of generality, each machine is assigned exactly one job in both the WSPT schedule and the optimal schedule. 
Since both the WSPT solution and the optimal solution have $m^{(1)}_1$ jobs starting at time $1$ and $n^{(1)}_b$ jobs starting at time 0, there are at least $m^{(1)}_1 - n^{(1)}_b$ jobs that start at time $1$ in both solutions.
Note that these jobs contribute equally to the objective value of both solutions. 
Thus, removing them (along with an equal number of machines with arrival time $1$) can only increase $\lambda$ and gives an instance $I^{(2)}$ with the desired properties.
      
In the second step, the properties of $I^{(2)}$ allow us to construct our sequence of instances~$I_h$.
These are instances of the problem $P | | \sum w_jC_j$, i.e., all machines must be available from time~$0$.
For all $h\in\N$, we first define $I^{(3)}_h$ as follows. 
We set $m^{(3)} = m^{(2)}$ and keep all of the jobs from $I^{(2)}$ for $I^{(3)}_h$. 
For each machine arriving at time $1$ in $I^{(2)}$, we additionally add $h$ dummy jobs of processing time and weight $\nicefrac{1}{h}$ to $I^{(3)}$.
By appropriately inserting these dummy jobs into the WSPT order, this step does not change the completion time of any other jobs in the WSPT schedule.
However, it increases the objective value of the WSPT solution by exactly $\Delta_h \coloneqq  m^{(2)}_1\cdot(\frac{1}{2}+\frac{1}{2h})$ and the objective value of the optimal solution by at most $\Delta_h$.
Similarly to \cite[Theorem 4.17]{aspern2025}, we use job splitting as introduced in \cite[Corollary 3]{Schwiegelshohn2011} to split a total of $m^{(2)}_1$ units of processing that are completed before $T_{\textup{idle}}$ into jobs of processing time and weight equal to $\nicefrac{1}{h}$.
Note that we can do this because $n^{(2)}_b \geq  m^{(2)}_1$, i.e., there exist at least $m_1^{(2)}$ jobs that are processed for at least one time unit before $T_{\textup{idle}}^{(3)} = T_{\textup{idle}}^{(2)}$.
The job splitting decreases the contribution of these jobs by a total of $\Delta'_h\geq  m^{(2)}_1\cdot(\frac{1}{2}-\frac{1}{2h})$.
Hence, we have that: 
\begin{equation*}
    \lim_{h\rightarrow \infty} \Delta'_h \eqqcolon \Delta'\geq \frac{1}{2} m^{(2)}_1 = \Delta \coloneqq \lim_{h\rightarrow \infty} \Delta_h.
\end{equation*}
Again, by appropriately inserting these small jobs into the WSPT order, the contributions of all other jobs remain unchanged under the WSPT policy, while they can only decrease in an optimal solution.

We denote the resulting instance $I_h$ and obtain:
    \begin{align*}
        \lambda(I^{(2)}) & \leq \frac{\sum_j w_j C^{\textup{WSPT}(I^{(2)})}_j + \Delta - \Delta'}{\sum_j w_j C_j^{\textup{OPT}(I^{(2)})}+ \Delta - \Delta'}
        = \lim_{h\rightarrow\infty}\frac{\sum_j w_j C^{\textup{WSPT}(I^{(2)})}_j + \Delta_h - \Delta'_h }{\sum_j w_j C_j^{\textup{OPT}(I^{(2)})} + \Delta_h - \Delta'_h }
        \\
        & \leq \lim_{h\rightarrow\infty}\frac{\sum_j w_j C^{\textup{WSPT}(I_h)}_j}{\sum_j w_j C_j^{\textup{OPT}(I_h)}}.
    \end{align*}

In the first inequality, we use $\Delta'\geq \Delta$, which holds by definition.
The second inequality follows from the fact that both the addition of the small dummy jobs and the job splitting may lead to an improvement in the objective value of the optimal solution, i.e.:
\begin{equation*}
    \sum_j w_j C_j^{\textup{OPT}(I_h)} \leq \sum_j w_j C_j^{\textup{OPT}(I^{(2)})} + \Delta_h - \Delta'_h.
\end{equation*}
The performance guarantee of WSPT on the instances $I_h$ are bounded by the Kawaguchi-Kyan bound, which completes the proof.
\end{proof}

\subsection{Counterexamples}
\label{sec:Counterexamples}

We have proven Lemma~\ref{lemma: kawaguchi-kyan arrival times}, and by extension also Theorem~\ref{thm: generalized Kawaguchi-Kyan}, in a rather restricted setting that suffices for our auxiliary instances.
Namely, we allow machine arrival times to be only~$0$ or~$1$, and we restrict all processing times to be integers.
In the following, we provide a small example to show that neither of the results holds in general when all arrival times are either $0$ or some positive integer $a$, i.e., $a_i \in \{0,a\}$ (or equivalently, if $a_i \in \{0,1\}$ and rational processing times are allowed).

\begin{example}
    Consider the following sequence of instances $I_\kappa$ of $P \vert a_i\in\lbrace 0,a\rbrace \vert \sum w_jC_j$. In instance $I_\kappa$, let there be one job with weight and processing time $\kappa\varphi$, where $\varphi:= \frac{1}{2}\left( 1 + \sqrt{5}\right)$ and $\kappa\in\mathbb{Z}$, and $\kappa$ small jobs with weight and processing time $1$. Let one machine arrive at time~$0$, and $\kappa$ machines at time $\kappa$.

    Any schedule without unforced idleness is a WSPT schedule because $w_j = p_j$ for all jobs~$j$. 
    Hence, we compare the worst-case WSPT schedule (which schedules all small jobs first) with the optimal one (which schedules the large job first).
    As depicted in Figure~\ref{fig:WSPT KK}, this leads to a very large completion time of the large job and no utilization of the additional machines in the worst ordering, whereas the small jobs are evenly distributed between all of the machines with arrival time $\kappa$ in the optimal solution.
    Altering the weights by small terms ensures that the worst-case WSPT schedule is the unique WSPT schedule.
    This yields the following performance guarantee:
    \begin{equation*}
        \begin{split}
            \frac{\sum_j w_j C^{\textup{WSPT}(I_\kappa)}_j}{\sum_j w_j C_j^{\textup{OPT}(I_\kappa)}} = \frac{\kappa^2\varphi\cdot\left(1+\varphi\right) + \frac{\kappa}{2}(\kappa+1)}{\kappa^2\varphi^2 + \kappa\cdot \left(\kappa+1\right)}
            \stackrel{\kappa\rightarrow\infty}{\longrightarrow}\frac{\varphi^2 + \varphi + \frac{1}{2}}{\varphi^2 + 1} = \frac{3+\sqrt{5}}{4}> \frac{1 + \sqrt{2}}{2}.
        \end{split}
    \end{equation*}
    This example motivates the choice of the restriction $a_i \in \{0,1\}$ to establish the $\frac{1}{2} \left ( 1 + \sqrt{2}\right )$- approximation in Lemma~\ref{lemma: kawaguchi-kyan arrival times} further.

      \begin{figure}
      \centering
        \def\ml{1}
        \def\h{0.5}
        \pgfmathsetmacro{\yshift}{\h*(-\ml-5.65)}
        \pgfmathsetmacro{\mone}{\ml + 1}
        \pgfmathsetmacro{\top}{-\h/2}
        \pgfmathsetmacro{\mid}{\h*(-\ml)-\h/2}
        \pgfmathsetmacro{\bottom}{\h*(-\ml)-\h/2}
        \centering
        \begin{subfigure}{0.4\textwidth}
        \centering
    		\begin{tikzpicture}[]		
                
    			\draw[-] (0,0) -- (0, \yshift);
    			\draw[-latex] (0, \yshift) -- node[pos=1,right] {$t$} (4,\yshift);
    
     		    \draw[-,thick] (1,\yshift) -- node[below,yshift=-1mm] {$\kappa$} (1, \yshift-0.2);
                \draw[-,thick] (2.6,\yshift) -- node[below,yshift=-1mm] {$\kappa+\kappa\varphi$} (2.6, \yshift-0.2);

               \foreach \i in {1,...,\ml} { 
                    \node[minimum width=1.6cm, job_l, fill=color2] () at (1,-{\h*\i}) {};
     			}
                
                \foreach \i in {1,2,4,5}{
                    \node[minimum width=1cm, job_l,pattern=north east lines] () at (0,{-\h*(\ml+\i)}) {};
                }

                \foreach \x in {0,0.0666,...,0.999}{
                    \node[minimum width=0.0666cm, job_l, fill=color3] () at (\x,{-\h}) {};
                }

                \node[rotate=90] () at (0.5,{-\h*(\ml+3)}) {$\cdots$};

                \foreach \i in {1,2,3}{
     				\node () at (-0.25,{-\h*(\i)}) {$\i$};
     			}
                \node () at (-0.25,{-\h*(5)}) {$\kappa$};
     			\node () at (-0.55,{-\h*(6)}) {$\kappa+1$};
    	    \end{tikzpicture}
        \caption{An imbalanced WSPT schedule.}
        \end{subfigure}
        \begin{subfigure}{0.4\textwidth}
            	\centering
    	\begin{tikzpicture}[]			
                
    			\draw[-] (0,0) -- (0, \yshift);
    			\draw[-latex] (0, \yshift) -- node[pos=1,right] {$t$} (4,\yshift);
    
     		    \draw[-,thick] (1,\yshift) -- node[below,yshift=-1mm] {$\kappa$} (1, \yshift-0.2);
                \draw[-,thick] (1.6,\yshift) -- node[below,yshift=-1mm] {$\kappa\varphi$} (1.6, \yshift-0.2);

               \foreach \i in {1,...,\ml} { 
                    \node[minimum width=1.6cm, job_l, fill=color2] () at (0,-{\h*\i}) {};
     			}
                
                \foreach \i in {1,2,4,5}{
                    \node[minimum width=1cm, job_l,pattern=north east lines] () at (0,{-\h*(\ml+\i)}) {};
                    \node[minimum width=0.0666cm, job_l, fill=color3] () at (1,{-\h*(\ml+\i)}) {};
                }

                \node[rotate=90] () at (0.5,{-\h*(\ml+3)}) {$\cdots$};

                \foreach \i in {1,2,3}{
     				\node () at (-0.25,{-\h*(\i)}) {$\i$};
     			}
                \node () at (-0.25,{-\h*(5)}) {$\kappa$};
     			\node () at (-0.55,{-\h*(6)}) {$\kappa+1$};
    	\end{tikzpicture}
        \caption{An optimal schedule.}
        \end{subfigure}
        
        \caption{Comparison of a worst-case and an optimal WSPT ordering.}
        \label{fig:WSPT KK}
    \end{figure}
\end{example}

One may conjecture that Lemma~\ref{lemma: kawaguchi-kyan arrival times} can be strengthened to show that the WSPT policy has a performance ratio of $1+\left( 2 \alpha + \sqrt{8\alpha} \right )^{-1}$ for $P | a_i \in \{0,1\}, p_j \in \mathbb{Z}| \sum w_jC_j(\alpha)$ for all $\alpha \in [\nicefrac{1}{2},1]$, as it does for $P | | \sum w_jC_j(\alpha)$.
Here, $C_j(\alpha) = C_j - (1-\alpha)p_j$ denotes the $\alpha$-completion time of job~$j$.
This would lead to a slight strengthening of Theorem~\ref{thm: generalized Kawaguchi-Kyan} as in \citet{jager2018}.
However, this is not the case as the example below demonstrates for $\alpha = \nicefrac{1}{2}$.

\begin{example}
 Consider the following instance of $P \vert a_i\in\lbrace 0,1\rbrace, p_j \in \Z \vert \sum w_jC_j(\frac{1}{2})$. Let there be one job with weight $6$ and processing time $6$, and $5$ small jobs with weight~$1$ and processing time~$1$. Let two machines be available from time $0$ and one machine only from time $1$.

 Again, scheduling the small jobs first yields a worst-case WSPT schedule, while scheduling the large job first yields an optimal one, as illustrated in Figure~\ref{fig:alpha=1/2}. It can be computed that:

     \begin{equation*}
        \begin{split}
        \frac{\sum_j w_j C^{\textup{WSPT}}_j(\frac{1}{2})}{\sum_j w_j C_j^{\textup{OPT}}(\frac{1}{2})} = \frac{35.5}{26.5} 
        > 4/3 = 1 + \frac{1}{2 \cdot \frac{1}{2} + \sqrt{8 \cdot \frac{1}{2}}}.
        \end{split}
    \end{equation*}
While it is not clear that the approximation ratio of $1 + \frac{1}{2}\left( \sqrt{2} - 1 \right) \left ( 1 + \Delta \right )$ is tight for the WSEPT policy, the above example shows that it is not possible to obtain a stronger bound by using the framework of \citet{jager2018}.

\begin{figure}
    	\centering
        \definecolor{color0}{rgb}{0.95, 0.95, 0.95}
        \begin{subfigure}{0.4\textwidth}
        \centering
    		\begin{tikzpicture}[]			
                
    			\draw[-] (0,0) -- (0, -2);
    			\draw[-latex] (0, -1.85) -- node[pos=1,right] {$t$} (4.5,-1.85);
    
     		    \draw[-,thick] (1,-1.85) -- node[below,yshift=-1mm] {$2$} (1, -2.05);
                \draw[-,thick] (4,-1.85) -- node[below,yshift=-1mm] {$8$} (4, -2.05);

               \foreach \i in {1,...,3} {
                    \foreach \x in {0,1} {
                    \node[minimum width=0.5cm, job_l, fill=color3] () at (\x*0.5,-{\i*0.5}) { };
                    }
     			}
                \node[minimum width=0.5cm, job_l, fill=white] () at (0,-1.5) { };
                \node[minimum width=0.5cm, job_l, pattern=north east lines] () at (0,-1.5) { };
                \node[minimum width=3cm, job_l, fill=color2] () at (1,-0.5) { };

    	   \end{tikzpicture}
           \caption{A WSPT solution.}
        \end{subfigure}
        \begin{subfigure}{0.4\textwidth}
        \centering
    		\begin{tikzpicture}[]			
                
    			\draw[-] (0,0) -- (0, -2);
    			\draw[-latex] (0, -1.85) -- node[pos=1,right] {$t$} (4.5,-1.85);
    
     		    \draw[-,thick] (1.5,-1.85) -- node[below,yshift=-1mm] {$3$} (1.5, -2.05);
                \draw[-,thick] (3,-1.85) -- node[below,yshift=-1mm] {$6$} (3, -2.05);

               \foreach \i in {2,3} {
                    \foreach \x in {0,1,2} {
                    \node[minimum width=0.5cm, job_l, fill=color3] () at (\x*0.5,-{\i*0.5}) { };
                    }
     			}
                \node[minimum width=0.5cm, job_l, fill=white] () at (0,-1.5) { };
                \node[minimum width=0.5cm, job_l, pattern=north east lines] () at (0,-1.5) { };
                \node[minimum width=3cm, job_l, fill=color2] () at (0,-0.5) { };

    	   \end{tikzpicture}
           \caption{An optimal solution.}
        \end{subfigure}
        \caption{A counterexample for $\alpha = 1/2$.}
        \label{fig:alpha=1/2}
\end{figure}
\end{example}

\section{Extensions}
\label{sec:Extensions}

\subsection{Monotone Hazard Rate Model}
\label{sec:MHR}

In this section, we present the monotone hazard rate model introduced by \citet{Weber1979} and state a more general version of Corollary~\ref{corollary:LERL}. Let $h_j: \mathbb{N}_0 \rightarrow [0,1]$ denote the probability that job~$j$ is completed by loop $l+1$, conditioned on it not being completed after loop $l$.

We say that $\{Y_1, Y_2, \ldots, Y_n\}$ have a \emph{monotone hazard rate} if the following hold:

\begin{enumerate}
    \item There exist $\nu$ functions $f_1, \ldots, f_\nu: \mathbb{N}_0 \rightarrow [0,1]$ that are either all non-increasing or all non-decreasing such that:
    $$f_c(l_1) < f_{c+1}(l_2) \ \textup{ for all } c \in [\nu-1], l_1, l_2 \in \mathbb{N}_0.$$
    \item For all $j\in [n]$, there exists an $c \in [\nu]$ and $x_j \in \mathbb{N}_0$ such that:
    $$h_j(l) = f_c(x_j+l) \ \textup{ for all } l \in \mathbb{N}_0.$$
\end{enumerate}

\citet{Weber1979} shows that the preemptive LEPT and SEPT policies minimize the expected makespan and expected total completion time, respectively, for \textsc{Stochastic Scheduling with Machine Arrivals} if $\{Y_1, Y_2, \ldots, Y_n\}$ have a monotone hazard rate. 

Inspired by this result, we introduce the dynamic and interruptive priority policies \emph{Most Expected Remaining Loops first} (MERL) and \emph{Least Expected Remaining Loops first} (LERL).
These policies are defined by the same priority values as their static counterparts from Section~\ref{sec:LERL&MERL}, but the priority value is updated each time a loop is completed.
In particular, if job~$j$ re-enters the flow shop after completing loop~$l$, its priority value is $\E[Y_j - l \vert Y_j > l]$ under the MERL policy, or $\E[Y_j - l \vert Y_j > l]^{-1}$ under the LERL policy.
Analogously to Lemma~\ref{lem:priority} and Corollary~\ref{cor:priority policies}, it can be shown that the MERL and LERL policies induce the preemptive LEPT and SEPT policies, respectively.
Here, we apply the arguments of Lemma~\ref{lem:priority} to the starting time of each individual loop rather than focusing only on the starting time of the first loop.
Combined with Theorem~\ref{thm:approximation}, this gives the following result.

\begin{corollary}
    If $\{Y_1, Y_2, \ldots, Y_n\}$ have a monotone hazard rate, then the MERL policy minimizes the expected makespan, and the LERL policy minimizes the expected total completion time for \textsc{Flow Shop Scheduling with Stochastic Reentry}.
    \label{corollary:LERL2}
\end{corollary}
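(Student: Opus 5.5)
We follow the same two-step pattern as Corollary~\ref{corollary:LERL}. First we show that MERL and LERL induce, on every auxiliary instance $A_k(I)$, the preemptive LEPT and SEPT policies of \citet{Weber1979}. Then we invoke his optimality result for identical parallel machines with integral arrival times under monotone hazard rates. Finally, Theorem~\ref{thm:approximation} with $\alpha=1$ transfers optimality to the flow shop. Since the auxiliary instances have $a_i^k\in\{0,1\}$ and processing times $X_j=Y_j$, the hazard rates of the $X_j$ coincide with the $h_j$. So the monotone hazard rate assumption on $\{Y_1,\dots,Y_n\}$ is exactly the hypothesis of Weber's theorem for each $A_k(I)$, and $\textup{OPT}(A_k(I))$ can be taken to be preemptive LEPT (makespan) or preemptive SEPT (total completion time).

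The main step is the analogue of Lemma~\ref{lem:priority} for dynamic priorities. Fix a realization $y$ and $k$. Under MERL/LERL, every loop start $\tau$ in $I$ is mapped to a unit start at time $t+a_i^k$ on machine $i$ in $A_k(I)$. A job's priority after $l$ completed loops in $I$ equals the priority after $l$ completed units in $A_k(I)$, namely $\E[Y_j-l\mid Y_j>l]$ or its inverse. So it suffices to show that at each integer time $s$ of $A_k(I)$, the set of units started consists of the available jobs with the best current priorities, with ties broken as in $\pi$.

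To do this, we collect the loops started in $I$ during the window $\tau\in\{m s-(k-1)+\dots\}$ that map to time $s$. These are the loops with $t=s$ on machines $i\le m-k+1$, together with those with $t=s-1$ on machines $i>m-k+1$. They form $m$ consecutive flow-shop start times, or $m-k+1$ of them at $s=0$. Consecutive here means a contiguous interval of length $m$ in $\tau$, by the definition of $a_i^k$. Within a contiguous length-$m$ interval no job can start two loops, and no loop started in the interval completes within it. Hence the set of jobs available during the window does not change except by starting jobs. Because $\pi$ greedily picks the best available job at each of these $m$ steps, the chosen set is the top-priority available set. We must also check that every job available at time $s$ in $A_k(I)$ was available throughout that window in $I$. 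This follows from Lemma~\ref{lemma: time mapping}: a loop completed at $\tau_1\le\tau_2$ maps to a completion no later than the mapped start. The mapped completion times are exactly at the window boundary, because a loop started at $\tau$ completes at $\tau+m$, which maps to $s+1$.

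The obstacle I expect is precisely this window/availability bookkeeping. One must ensure that a job re-entering mid-window in $I$ is not treated as available at the corresponding $A_k$ time, and vice versa. The key fact to verify is that the completion of a loop started in window $s$ occurs at a $\tau$ lying in window $s+1$, never inside window $s$. Once this is established, the induced policies are preemptive LEPT/SEPT; preemption occurs only at integer times, which Weber's model permits. Weber's theorem gives $\E[\gamma(\pi_k)]=\E[\gamma(\textup{OPT}(A_k(I)))]$, and Theorem~\ref{thm:approximation} with $\alpha=1$ concludes the proof for both objectives.
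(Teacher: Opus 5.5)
Your overall route is the paper's: show that MERL and LERL induce preemptive LEPT and SEPT on each $A_k(I)$, invoke \citet{Weber1979}, and conclude with Theorem~\ref{thm:approximation} for $\alpha=1$. Your window $\tau\in\{ms-k+1,\dots,ms+m-k\}$ is also correct. The gap is in how you verify the central step.

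\textbf{What is false.} The claim that the set of available jobs changes within a window only by starting jobs does not hold. It is true that no loop started in window $s$ completes inside it. But a loop started at position $y$ of window $s-1$ completes at position $y$ of window $s$, so jobs re-enter mid-window. Hence your claim that every job available at time $s$ in $A_k(I)$ was available throughout the window in $I$ is also false. For example, take $m=2$, $k=1$ and a job whose first loop starts at $\tau=1$ and which needs a second loop. It is available at time $1$ in $A_1(I)$, but in $I$ it is still on machine~$2$ at $\tau=2$, the first slot of window~$1$. Lemma~\ref{lemma: time mapping} gives only the opposite implication (availability in $I$ implies availability at the mapped time), which is the non-anticipativity direction. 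Your stated obstacle also points the wrong way: a job re-entering mid-window \emph{is} available at time $s$ in $A_k(I)$, and preemptive LEPT/SEPT must take it into account. So ``$\pi$ picks the best available job at each step, hence the chosen set is the top set'' does not follow from what you wrote.

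\textbf{What is missing.} You need an argument that the greedy choices over the window still yield the top-priority set despite mid-window arrivals. The fact that makes this work is that at most one loop starts per time unit in $I$. Hence the jobs started in window $s-1$ re-enter at pairwise distinct positions of window $s$, while all other uncompleted jobs are available from position $0$.

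Suppose a job $j'$ among the $m$ highest current priorities (over all uncompleted jobs) is not started in window $s$. Let $H$ be the set of higher-priority jobs, so $|H|\le m-1$, and let $\rho(j')$ be the re-entry position of $j'$. From $\rho(j')$ on, $j'$ is available but not chosen, so every such position is filled by an $H$-job. Since $|H|\le m-1$, not all $m$ positions can be filled this way. Let $x^*<\rho(j')$ be the last position at which no $H$-job is available. Then positions $x^*+1,\dots,m-1$ are filled by distinct $H$-jobs that re-entered after $x^*$, at distinct positions that also differ from $\rho(j')>x^*$. That requires $m-x^*$ distinct positions in a set of size $m-1-x^*$, a contradiction. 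For $s=0$, all jobs are available at position $0$, and the window has exactly $m-k+1$ slots, matching the machines available at time $0$ in $A_k(I)$.

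With this lemma, and the agreement of current priorities in $I$ and $A_k(I)$, the induced policies are indeed preemptive LEPT/SEPT, and the rest of your proof goes through. The paper itself only asserts this step ``analogously'' to Lemma~\ref{lem:priority}, so this is exactly the detail a rigorous version must supply.
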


The monotone hazard rate model generalizes several different models \citep{Weber1979}. 
The following two are of particular interest because they capture the results in Corollary~\ref{corollary:LERL} and in \citep{yu2020} and \citep{aspern2025}.
If all $Y_j$ are geometrically distributed with parameters $q_j$, then we have that $h_j(l) = q_j$ for all $l \in \mathbb{N}_0$.
Appropriately choosing constant functions $f_1, \ldots, f_\nu$ and setting $x_j = 0$ for all $j$ shows that $\{Y_1, Y_2, \ldots, Y_n\}$ have a monotone hazard rate.
By the memorylessness property of the geometric distribution, MERL and LERL are equivalent to the non-interruptive policies MEL and LEL in this case.
Thus, Corollary~\ref{corollary:LERL2} is a strict generalization of Corollary~\ref{corollary:LERL}.
If all $Y_j$ are deterministic, then we can choose a single function $f$ by defining $f(l) = 1$ for $l \geq Y_{\max}$ and $f(l) = 0$ for $l < Y_{\max}$, where $Y_{\max} \coloneqq \max_j Y_j$.
Now we can set $x_j = Y_{\max} - Y_j$ and we have $h_j(l) = f(x_j + l)$.
In this case, MERL and LERL are equivalent to the deterministic scheduling rules \emph{Most Remaining Loops first} (MRL) and \emph{Least Remaining Loops first} (LRL), which were previously studied by \citet{yu2020} and \citet{aspern2025}.

\subsection{Practically Motivated Distributions}
\label{sec:Binomial}
In this section, we present several examples involving models with different practically motivated distributions, in which we can apply Corollary~\ref{cor:jäger-skutella} to show that WLEL has an approximation guarantee of $\sqrt{2}$ for the total weighted completion time objective, underscoring the strength of this theoretical result. Afterward, we show that the optimality results derived in Corollary~\ref{corollary:LERL2} for the makespan objective cannot be extended to these generalized models.

Consider a model in which each job~$j$ has a known number of required loops, $\mathcal L_j$, and each of its loops has a probability of success $q_j$. If a loop fails (e.g., due to a processing error), it must be completed again before the next loop of this job can be processed. This can be captured by negative binomial distributions, which generalize the deterministic and geometric cases.

As another example, consider a model similar to the previous one, but now the entire job must be restarted from the beginning if a loop fails. Thus, the completion time of job~$j$ is the first time of $\mathcal L_j$ consecutive successful loops. 

For both of these families of distributions, one can derive an upper bound of $\Delta\leq 1$, e.g., using results from~\cite{Philippou1983}. In general, this property can be shown for all distributions in a model, where the number of loops~$Y_j$ for job~$j$ is, at any point in time, at most the initial expected number, i.e.: 
    \begin{equation*}
        \E\left[Y_j - x \vert Y_j >x\right] \leq \E\left[Y_j\right] \ \textup{ for all } x\geq0.
    \end{equation*}
For these so-called NBUE distributions, \citet{hall1981} proved $\Delta\leq 1$, and, therefore, Corollary~\ref{cor:jäger-skutella} implies that the WLEL policy is a $\sqrt 2$-approximation.

In the remaining part of this section, we show that the MERL policy is no longer optimal for the makespan objective when both deterministic and geometrically distributed jobs are allowed. Thus, the optimality results derived in Corollary~\ref{corollary:LERL2} cannot be generalized in this direction.
Note that in this case, the MERL policy is no longer non-interruptive because the (expected) number of remaining loops of deterministic jobs changes over time.
\begin{example}
Consider an instance with $m=2$ machines, two deterministic jobs that require two loops for completion, i.e., $\mathcal L_1 = \mathcal L_2 = 2$, and one stochastic job with $Y_3$ geometrically distributed with success probability~$q_3 = \nicefrac{1}{2}+\varepsilon$ (i.e., $\E[Y_3] < \mathcal L_1 = \mathcal L_2$).

\begin{figure*}
	\centering
    \begin{subfigure}{0.4\textwidth}
    \begin{tikzpicture}[scale=0.5]
			
			\pgfmathsetmacro{\m}{2.65}
			
			\draw[-] (0,0) -- (0, -\m);
			\draw[-latex] (0, -\m) -- node[pos=1,right] {$t$} (10.5, -\m);

			
			\foreach \x in {1,3}{
				\node[job1] (1) at (\x,-1) {}; \node at (1.center) [] {$1$};
				\node[job1] (1) at (\x+1,-2) {}; \node at (1.center) [] {$1$};
			}
			\foreach \x in {2,4}{
				\node[job2] (2) at (\x,-1) {}; \node at (2.center) [] {$2$};
				\node[job2] (2) at (\x+1,-2) {}; \node at (2.center) [] {$2$};
			}
			\foreach \x in {0}{
				\node[job3] (3) at (\x,-1) {}; \node at (3.center) [] {$3$};
				\node[job3] (3) at (\x+1,-2) {}; \node at (3.center) [] {$3$};
			}
            \draw[-,thick] (6,-0.25-\m) -- node[below,yshift=-1mm] {$C_{\max}=6$} (6, 0.25-\m);
	\end{tikzpicture}
    \caption{LV for $Y_3 = 1$}	
    \end{subfigure}
    \begin{subfigure}{0.4\textwidth}
    \begin{tikzpicture}[scale=0.5]
			
			\pgfmathsetmacro{\m}{2.65}
			
			\draw[-] (0,0) -- (0, -\m);
			\draw[-latex] (0, -\m) -- node[pos=1,right] {$t$} (10.5, -\m);

			
			\foreach \x in {0,3}{
				\node[job1] (1) at (\x,-1) {}; \node at (1.center) [] {$1$};
				\node[job1] (1) at (\x+1,-2) {}; \node at (1.center) [] {$1$};
			}
			\foreach \x in {1,4}{
				\node[job2] (2) at (\x,-1) {}; \node at (2.center) [] {$2$};
				\node[job2] (2) at (\x+1,-2) {}; \node at (2.center) [] {$2$};
			}
			\foreach \x in {2}{
				\node[job3] (3) at (\x,-1) {}; \node at (3.center) [] {$3$};
				\node[job3] (3) at (\x+1,-2) {}; \node at (3.center) [] {$3$};
			}
            \draw[-,thick] (6,-0.25-\m) -- node[below,yshift=-1mm] {$C_{\max}=6$} (6, 0.25-\m);
	\end{tikzpicture}
    \caption{MERL for $Y_3 = 1$}	
    \end{subfigure}
    
    \begin{subfigure}{0.4\textwidth}
    \begin{tikzpicture}[scale=0.5]
			
			\pgfmathsetmacro{\m}{2.65}
			
			\draw[-] (0,0) -- (0, -\m);
			\draw[-latex] (0, -\m) -- node[pos=1,right] {$t$} (10.5, -\m);

			
			\foreach \x in {1,4}{
				\node[job1] (1) at (\x,-1) {}; \node at (1.center) [] {$1$};
				\node[job1] (1) at (\x+1,-2) {}; \node at (1.center) [] {$1$};
			}
			\foreach \x in {3,5}{
				\node[job2] (2) at (\x,-1) {}; \node at (2.center) [] {$2$};
				\node[job2] (2) at (\x+1,-2) {}; \node at (2.center) [] {$2$};
			}
			\foreach \x in {0,2}{
				\node[job3] (3) at (\x,-1) {}; \node at (3.center) [] {$3$};
				\node[job3] (3) at (\x+1,-2) {}; \node at (3.center) [] {$3$};
			}
            \draw[-,thick] (7,-0.25-\m) -- node[below,yshift=-1mm] {$C_{\max}=7$} (7, 0.25-\m);
	\end{tikzpicture}
    \caption{LV for $Y_3 = 2$}	
    \end{subfigure}
    \begin{subfigure}{0.4\textwidth}
    \begin{tikzpicture}[scale=0.5]
			
			\pgfmathsetmacro{\m}{2.65}
			
			\draw[-] (0,0) -- (0, -\m);
			\draw[-latex] (0, -\m) -- node[pos=1,right] {$t$} (10.5, -\m);

			
			\foreach \x in {0,3}{
				\node[job1] (1) at (\x,-1) {}; \node at (1.center) [] {$1$};
				\node[job1] (1) at (\x+1,-2) {}; \node at (1.center) [] {$1$};
			}
			\foreach \x in {1,5}{
				\node[job2] (2) at (\x,-1) {}; \node at (2.center) [] {$2$};
				\node[job2] (2) at (\x+1,-2) {}; \node at (2.center) [] {$2$};
			}
			\foreach \x in {2,4}{
				\node[job3] (3) at (\x,-1) {}; \node at (3.center) [] {$3$};
				\node[job3] (3) at (\x+1,-2) {}; \node at (3.center) [] {$3$};
			}
            \draw[-,thick] (7,-0.25-\m) -- node[below,yshift=-1mm] {$C_{\max}=7$} (7, 0.25-\m);
	\end{tikzpicture}
    \caption{MERL for $Y_3 = 2$}	
    \end{subfigure}
    
    \begin{subfigure}{0.4\textwidth}
    \begin{tikzpicture}[scale=0.5]
			
			\pgfmathsetmacro{\m}{2.65}
			
			\draw[-] (0,0) -- (0, -\m);
			\draw[-latex] (0, -\m) -- node[pos=1,right] {$t$} (10.5, -\m);

			
			\foreach \x in {1,5}{
				\node[job1] (1) at (\x,-1) {}; \node at (1.center) [] {$1$};
				\node[job1] (1) at (\x+1,-2) {}; \node at (1.center) [] {$1$};
			}
			\foreach \x in {3,6}{
				\node[job2] (2) at (\x,-1) {}; \node at (2.center) [] {$2$};
				\node[job2] (2) at (\x+1,-2) {}; \node at (2.center) [] {$2$};
			}
			\foreach \x in {0,2,4}{
				\node[job3] (3) at (\x,-1) {}; \node at (3.center) [] {$3$};
				\node[job3] (3) at (\x+1,-2) {}; \node at (3.center) [] {$3$};
			}
            \draw[-,thick] (8,-0.25-\m) -- node[below,yshift=-1mm] {$C_{\max}=8$} (8, 0.25-\m);
	\end{tikzpicture}
    \caption{LV for $Y_3 = 3$}	
    \end{subfigure}
    \begin{subfigure}{0.4\textwidth}
    \begin{tikzpicture}[scale=0.5]
			
			\pgfmathsetmacro{\m}{2.65}
			
			\draw[-] (0,0) -- (0, -\m);
			\draw[-latex] (0, -\m) -- node[pos=1,right] {$t$} (10.5, -\m);

			
			\foreach \x in {0,3}{
				\node[job1] (1) at (\x,-1) {}; \node at (1.center) [] {$1$};
				\node[job1] (1) at (\x+1,-2) {}; \node at (1.center) [] {$1$};
			}
			\foreach \x in {1,5}{
				\node[job2] (2) at (\x,-1) {}; \node at (2.center) [] {$2$};
				\node[job2] (2) at (\x+1,-2) {}; \node at (2.center) [] {$2$};
			}
			\foreach \x in {2,4,6}{
				\node[job3] (3) at (\x,-1) {}; \node at (3.center) [] {$3$};
				\node[job3] (3) at (\x+1,-2) {}; \node at (3.center) [] {$3$};
			}
            \draw[-,thick] (8,-0.25-\m) -- node[below,yshift=-1mm] {$C_{\max}=8$} (8, 0.25-\m);
	\end{tikzpicture}
    \caption{MERL for $Y_3 = 3$}	
    \end{subfigure}
    
    \begin{subfigure}{0.4\textwidth}
    \begin{tikzpicture}[scale=0.5]
			
			\pgfmathsetmacro{\m}{2.65}
			
			\draw[-] (0,0) -- (0, -\m);
			\draw[-latex] (0, -\m) -- node[pos=1,right] {$t$} (10.5, -\m);

			
			\foreach \x in {1,5}{
				\node[job1] (1) at (\x,-1) {}; \node at (1.center) [] {$1$};
				\node[job1] (1) at (\x+1,-2) {}; \node at (1.center) [] {$1$};
			}
			\foreach \x in {3,7}{
				\node[job2] (2) at (\x,-1) {}; \node at (2.center) [] {$2$};
				\node[job2] (2) at (\x+1,-2) {}; \node at (2.center) [] {$2$};
			}
			\foreach \x in {0,2,4,6}{
				\node[job3] (3) at (\x,-1) {}; \node at (3.center) [] {$3$};
				\node[job3] (3) at (\x+1,-2) {}; \node at (3.center) [] {$3$};
			}
            \draw[-,thick] (9,-0.25-\m) -- node[below,yshift=-1mm] {$C_{\max}=9$} (9, 0.25-\m);
	\end{tikzpicture}
    \caption{LV for $Y_3 = 4$}	
    \end{subfigure}
    \begin{subfigure}{0.4\textwidth}
    \begin{tikzpicture}[scale=0.5]
			
			\pgfmathsetmacro{\m}{2.65}
			
			\draw[-] (0,0) -- (0, -\m);
			\draw[-latex] (0, -\m) -- node[pos=1,right] {$t$} (10.5, -\m);

			
			\foreach \x in {0,3}{
				\node[job1] (1) at (\x,-1) {}; \node at (1.center) [] {$1$};
				\node[job1] (1) at (\x+1,-2) {}; \node at (1.center) [] {$1$};
			}
			\foreach \x in {1,5}{
				\node[job2] (2) at (\x,-1) {}; \node at (2.center) [] {$2$};
				\node[job2] (2) at (\x+1,-2) {}; \node at (2.center) [] {$2$};
			}
			\foreach \x in {2,4,6,8}{
				\node[job3] (3) at (\x,-1) {}; \node at (3.center) [] {$3$};
				\node[job3] (3) at (\x+1,-2) {}; \node at (3.center) [] {$3$};
			}
            \draw[-,thick] (10,-0.25-\m) -- node[below,yshift=-1mm] {$C_{\max}=10$} (10, 0.25-\m);
	\end{tikzpicture}
    \caption{MERL for $Y_3 = 4$}	
    \end{subfigure}

    \caption{LV policy vs. MERL policy under different realizations of $Y_3$.}
    \label{fig:LV-MERL}
\end{figure*}
One can observe that the \emph{Largest Variance first} (LV) policy with tie breaking according to MERL (a non-preemptive version of this policy appears in \cite{pinedo1987}) has the same objective value as the MERL policy for $Y_3 \leq 3$ and a strictly smaller objective value for~$Y_3 \geq 4$. 
Thus, the LV policy has a strictly smaller expected makespan than the MERL policy.
The resulting schedules of both policies for $Y_3 \leq 4$ are depicted in Figure~\ref{fig:LV-MERL}.
\end{example}

\section{Discussion}
\label{sec:outlook}

In this paper, we establish the first optimality results and approximation guarantees for \textsc{Flow Shop Scheduling with Stochastic Reentry}.
The main tool for establishing these results is an aggregate completion-time-preserving reduction to stochastic parallel-machine scheduling with binary machine arrivals.
This reduction enables extensions and, in some cases, direct transfers of classical optimality results and approximation guarantees from the literature.
At the same time, the established connection provides insights into the structure of the reentrant flow shop that have inspired the development of our approximate policies.
The reduction provides a basis for designing and analyzing further policies---whether optimal, approximate, or heuristic---for the reentrant flow shop model under various stochastic models that capture real-world uncertainty in production environments.

In Section~\ref{sec:Binomial}, we introduce a family of practically motivated probability distributions that we expect to be particularly relevant for future research. Our result in Corollary~\ref{cor:jäger-skutella} immediately implies that the WLEL policy is a $\sqrt 2$-approximation for the total weighted completion time objective under these distributions.
While we show in Section~\ref{sec:Binomial} that MERL is no longer optimal under these distributions, it would be interesting to establish worst-case performance guarantees in general, or in settings where either the number of loops or the success probability is constant across all jobs, or to develop new policies specifically tailored to these distributions.

Finally, it is important to note that our reduction heavily depends on the assumption of unit processing times. 
For future research, we propose establishing structural results on flow shops with reentry---both deterministic and stochastic---where this assumption is relaxed.
Such results would allow us to study more complex manufacturing settings, such as reentrant flow shops with bottleneck machines.

\section*{Acknowledgments}
Maximilian von Aspern and Felix Buld were funded by the German Research Foundation (DFG) under project number 277991500. Parts of this work were carried out during a research visit to Columbia University in the City of New York. Felix Buld was supported by a scholarship from the German Academic Exchange Service (DAAD).

\bibliography{bibliography}

\end{document}